\newcommand{\ignore}[1]{}
\newcommand{\Occ}{\mathit{Occ}}
\newcommand{\NOcc}{\mathit{nOcc}}
\newcommand{\SA}{\mathit{SA}}
\newcommand{\LCP}{\mathit{LCP}}
\newcommand{\VarOcc}{\mathit{vOcc}}
\newcommand{\suffix}{\mathit{suf}}
\newcommand{\prefix}{\mathit{pre}}
\newcommand{\LnOcc}{\mathit{LnOcc}} 
\newcommand{\RnOcc}{\mathit{RnOcc}}
\newcommand{\derive}{\mathit{val}}
\newcommand{\rev}[1]{#1^R}
\newcommand{\loc}{\protect\overleftrightarrow{\mathit{loc}}}
\newcommand{\lpc}{\protect\overrightarrow{\mathit{loc}}}
\newcommand{\lsc}{\protect\overleftarrow{\mathit{loc}}}
\newcommand{\bb}{\mathit{bb}} 
\newcommand{\be}{\mathit{be}} 
\newcommand{\eb}{\mathit{eb}} 
\newcommand{\ee}{\mathit{ee}} 
\newcommand{\bstr}{\mathit{p}} 
\newcommand{\estr}{\mathit{s}}
\author{
  Keisuke Goto
  \and
  Hideo Bannai
  \and
  Shunsuke Inenaga
  \and
  Masayuki Takeda
}
\institute{
  Department of Informatics, Kyushu University \\
  744 Motooka, Nishiku, Fukuoka 819--0395, Japan \\
  \email{\{keisuke.gotou,bannai,inenaga,takeda\}@inf.kyushu-u.ac.jp}
}
\title{Computing $q$-gram Non-overlapping Frequencies on SLP Compressed Texts}
\begin{document}

\maketitle
\begin{abstract}
  Length-$q$ substrings, or $q$-grams, can represent important
  characteristics of text data,
  and determining the frequencies of all $q$-grams
  contained in the data is an important problem with many applications
  in the field of data mining and machine learning.  
  In this paper, we consider the problem of calculating the
  {\em non-overlapping frequencies} of all $q$-grams in a text
  given in compressed form, namely,
  as a straight line program (SLP).
  We show that the problem can be solved in $O(q^2n)$ time and $O(qn)$ space
  where $n$ is the size of the SLP.
  This generalizes and greatly improves previous work
  (Inenaga \& Bannai, 2009)
  which solved the problem only for $q=2$
  in $O(n^4\log n)$ time and $O(n^3)$ space.
\end{abstract}
\section{Introduction}
In many situations, large-scale text data is first compressed for storage,
and then is usually decompressed when it is processed afterwards,
where we must again face the size of the data.
To circumvent this problem, algorithms that work directly on the
compressed representation without explicit decompression have gained
attention,
especially for the string pattern matching problem~\cite{amir92:_effic_two_dimen_compr_match},
and there has been growing interest in what problems can be
efficiently solved in this kind of 
setting~\cite{lifshits07:_proces_compr_texts,navarro07:_compr,hermelin09:_unified_algor_accel_edit_distan,matsubara_tcs2009,inenaga09:_findin_charac_subst_compr_texts,goto10:_fast_minin_slp_compr_strin,philip11:_random_acces_gramm_compr_strin}.

The {\em non-overlapping occurrence frequency} of a string $P$ in
a text string $T$ is defined as the maximum number of non-overlapping
occurrences of $P$ in $T$~\cite{apostolico96:_data_struc_algo_str_sta_}.
Non-overlapping frequencies are required in several grammar based
compression
algorithms~\cite{larsson00:_off_line_diction_based_compr,apostolico00:_off_line_compr_greed_textual_subst},
as well as ...
In this paper, we consider the problem of computing the
non-overlapping occurrence frequencies of {\em all} $q$-grams
(length-$q$ substrings)
occurring in a text $T$, when the text is
given as a {\em straight line program} (SLP)~\cite{NJC97} of size $n$.
An SLP is a context free
grammar in the Chomsky normal form that derives a single string.
SLPs are a widely accepted abstract model of various text compression schemes,
since texts compressed by any grammar-based compression algorithm
(e.g.~\cite{SEQUITUR,larsson00:_off_line_diction_based_compr}) can be represented as SLPs,
and those compressed by the LZ-family (e.g.~\cite{LZ77,LZ78}) can be quickly
transformed to SLPs.
Theoretically, the length $N$ of the text represented by an
SLP of size $n$ can be as large as $O(2^n)$, and therefore 
a polynomial time algorithm that runs on an SLP representation is,
in the worst case, faster than any algorithm which works on
the uncompressed string.

For SLP compressed texts, the problem was first considered
in~\cite{inenaga09:_findin_charac_subst_compr_texts},
where an algorithm for $q=2$ running in $O(n^4\log n)$ time and
$O(n^3)$ space was presented.
However, the algorithm cannot be readily extended to handle $q>2$.
Intuitively, the problem for $q=2$ is much easier compared to 
larger values of $q$, since there is only one way for
a $2$-gram to overlap, while there can be many ways that a
longer $q$-gram can overlap.
In this paper we present the first algorithm for calculating
the non-overlapping occurrence frequency of all $q$-grams,
that works for any $q\geq2$,
and runs in $O(q^2n)$ time and $O(qn)$ space.
Not only do we solve a more general problem, but the complexity is
greatly improved compared to previous work.

A similar problem for SLPs, where occurrences of $q$-grams are allowed to overlap, was also
considered in~\cite{inenaga09:_findin_charac_subst_compr_texts},
where an $O(|\Sigma|^2n^2)$ time and $O(n^2)$ space algorithm was
presented for $q=2$. A much simpler and efficient $O(qn)$
time and space algorithm for general $q \geq 2$ was recently
developed~\cite{goto10:_fast_minin_slp_compr_strin}.
As is the case with uncompressed strings,
ideas from the algorithms allowing overlapping occurrences
can be applied {\em somewhat} to the problem of obtaining non-overlapping
occurrence frequencies. 
However, there are still
difficulties that arise from the overlapping of occurrences that must be
overcome, i.e., the occurrences of each $q$-gram can be obtained in the same
way, but we must somehow compute their non-overlapping occurrence
frequency, which is not a trivial task.

For uncompressed texts, the problem considered in this paper 
can be solved in $O(|T|)$ time,
by applying string indices such as suffix arrays.
A similar problem is the
\emph{string statistics problem}~\cite{apostolico96:_data_struc_algo_str_sta_},
which asks for the non-overlapping occurrence frequency of a given string
$P$ in text string $T$. The problem can be solved in $O(|P|)$ time for
any $P$, provided that the
text is pre-processed in $O(|T|\log |T| )$ time using the sophisticated algorithm
of~\cite{brodal02:_solvin_strin_statis_probl_time_o}. 
However, note that the preprocessing requires only $O(|T|)$ time
if occurrences are allowed to overlap.
This perhaps indicates the intrinsic difficulty that arises when
considering overlaps.

\section{Preliminaries}

\subsection{Notation}
Let $\Sigma$ be a finite {\em alphabet}.
An element of $\Sigma^*$ is called a {\em string}.
The length of a string $T$ is denoted by $|T|$. 
The empty string $\varepsilon$ is a string of length 0,
namely, $|\varepsilon| = 0$.
A string of length $q > 0$ is called a \emph{$q$-gram}.
The set of $q$-grams is denoted by $\Sigma^q$.
For a string $T = XYZ$, $X$, $Y$ and $Z$ are called
a \emph{prefix}, \emph{substring}, and \emph{suffix} of $T$, respectively.
The $i$-th character of a string $T$ is denoted by $T[i]$ for $1 \leq i \leq |T|$,
and the substring of a string $T$ that begins at position $i$ and
ends at position $j$ is denoted by $T[i:j]$ for $1 \leq i \leq j \leq |T|$.
For convenience, let $T[i:j] = \varepsilon$ if $j < i$.
Let $\rev{T}$ denote the reversal of $T$, namely, $\rev{T} = T[N]T[N-1] \cdots T[1]$,
where $N = |T|$.

For an integer $i$ and a set of integers $A$, let 
$i\oplus A = \{ i+x \mid x\in A\}$ and
$i\ominus A = \{ i-x \mid x\in A\}$.
If $A = \emptyset$, then let $i \oplus A = i\ominus A = \emptyset$.
Similarly, for a pair of integers  $(x,y)$, let $i\oplus(x,y) = (i+x,i+y)$.

\subsection{Occurrences and Frequencies}
For any strings $T$ and $P$,
let $\Occ(T,P)$ be the set of occurrences of $P$ in $T$, i.e.,
\[
 \Occ(T,P) = \{k > 0 \mid T[k:k+|P|-1] = P\}.
\]
The number of occurrences of $P$ in $T$,
or the \emph{frequency} of $P$ in $T$ is, $|\Occ(T,P)|$.
Any two occurrences $k_1,k_2 \in \Occ(T,P)$ with $k_1 < k_2$
are said to be \emph{overlapping} if $k_1 + |P| - 1 \geq k_2$.
Otherwise, they are said to be \emph{non-overlapping}.
The \emph{non-overlapping frequency} $\NOcc(T, P)$
of $P$ in $T$ is defined
as the size of a largest subset of $\Occ(T,P)$ where any
two occurrences in the set are non-overlapping.
For any strings $X,Y$, we say that an occurrence $i$ of a string $Z$ in $XY$,
with $|Z|\geq2$,
{\em crosses} $X$ and $Y$, if $i \in [|X|-|Z|+2:|X|]\cap\Occ(XY,Z)$.

For any strings $T$ and $P$, we define
the sets of \emph{right and left priority non-overlapping occurrences} of $P$ in $T$, 
respectively, as follows:
\begin{eqnarray*}
  \RnOcc(T, P) &=& \left\{
  \begin{array}{ll}
    \emptyset & \mbox{ if } \Occ(T,P) = \emptyset,\\
    \{i\} \cup \RnOcc(T[1:i-1], P) & \mbox{ otherwise, }
  \end{array}\right.\\
  \LnOcc(T, P) &=& \left\{
  \begin{array}{ll}
    \emptyset & \mbox{ if } \Occ(T,P) = \emptyset,\\
     \{ j \} \cup j \! + \! |P| \! - \! 1 \! \oplus\! \LnOcc(T[j+|P|:|T|],P)  & \mbox{ otherwise,}
  \end{array}\right.
\end{eqnarray*}
where $i = \max \Occ(T, P)$ and $j = \min \Occ(T,P)$.
For all $k \in \RnOcc(T, P)$,
it is trivially said that $\RnOcc(T[k:|T|], P) \subseteq \RnOcc(T, P)$.
It can be said to $\LnOcc$ similarly.
Note that
$\RnOcc(T,P)\subseteq \Occ(T,P)$,
$\LnOcc(T,P)\subseteq \Occ(T,P)$,
and
$\LnOcc(T, P) = |T| - |P| + 2 \ominus \RnOcc(\rev{T}, \rev{P})$.

\begin{lemma} \label{lem:left_right_max_nonoverlap}
  $\NOcc(T, P) = |\RnOcc(T, P)| = |\LnOcc(T, P)|$
\end{lemma}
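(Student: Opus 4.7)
The plan is to reduce to a single statement by symmetry and then prove that one by induction. The relation $\LnOcc(T,P) = |T|-|P|+2 \ominus \RnOcc(\rev{T},\rev{P})$ recorded just before the lemma immediately gives $|\LnOcc(T,P)| = |\RnOcc(\rev{T},\rev{P})|$, and since non-overlappingness is preserved under reversal, $\NOcc(T,P) = \NOcc(\rev{T},\rev{P})$. Hence it suffices to prove $\NOcc(T,P) = |\LnOcc(T,P)|$; applying the result to $\rev{T},\rev{P}$ then yields the corresponding statement for $\RnOcc$.

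I would prove $\NOcc(T,P) = |\LnOcc(T,P)|$ by induction on $|T|$. The base case, $\Occ(T,P)=\emptyset$, is immediate from the definition. For the inductive step, let $j = \min \Occ(T,P)$. The first inequality $|\LnOcc(T,P)| \leq \NOcc(T,P)$ follows from the observation that the set produced by the recursive definition is itself non-overlapping: the next chosen occurrence lies at position $\geq j+|P|$, so successive chosen occurrences are separated by at least $|P|$ positions.

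For the reverse inequality I would use a standard exchange argument. Let $S \subseteq \Occ(T,P)$ be any maximum non-overlapping set and let $j' = \min S$. Replacing $j'$ by $j$ gives a set of the same size that is still non-overlapping, because every other $k \in S$ satisfies $k \geq j' + |P| \geq j + |P|$. Thus we may assume $j \in S$; the remaining elements form, after a shift of $j+|P|-1$, a non-overlapping subset of $\Occ(T[j+|P|:|T|],P)$, which by the induction hypothesis has size at most $|\LnOcc(T[j+|P|:|T|],P)|$. Adding back the one occurrence $j$ gives $\NOcc(T,P) \leq 1 + |\LnOcc(T[j+|P|:|T|],P)| = |\LnOcc(T,P)|$ by the recursive definition, closing the induction.

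The only delicate point, and the place I expect to spend the most care, is the exchange step: verifying that swapping the leftmost occurrence of $S$ for $j$ preserves non-overlappingness and that the ``residual'' of $S$ really corresponds to an occurrence set inside the suffix $T[j+|P|:|T|]$. Both facts follow from unfolding the definition of non-overlapping occurrences, but they need to be stated explicitly so that the induction goes through cleanly.
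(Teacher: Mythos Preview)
Your argument is correct, and it is the classic greedy exchange proof for interval scheduling: show the leftmost-first choice is safe, then recurse on the suffix. The paper proves the same identity but by a different induction. Rather than jumping along the recursive definition of $\LnOcc$, it inducts on the prefix length $i$, one position at a time: assuming $\NOcc(T[1:i],P)=|\LnOcc(T[1:i],P)|$ for all $i<k$, it argues that the difference $\NOcc(T[1:k],P)-|\LnOcc(T[1:k],P)|$ lies in $\{0,1\}$ and rules out the value $1$ by contradiction, using that a new occurrence ending at position $k$ would force both quantities to increase by one relative to $T[1:k-|P|]$. Your route has the advantage of mirroring the recursive definition of $\LnOcc$ directly and of making both inequalities explicit; the paper's route is terser but leans on the implicit observation that $\LnOcc$ is always a valid non-overlapping set (for the lower bound $0$) and that at most one occurrence ends at any given position (for the upper bound $1$). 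Either approach is perfectly adequate for this elementary lemma.
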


\begin{proof}
  See Appendix.
\end{proof}

\begin{lemma}
  \label{lemma:nocc}
  For any strings $T$ and $P$, and any integer $i$ with $1 \leq i \leq |T|$,
  let $\mathit{u_1} = \max\LnOcc(T[1:i-1],P)+|P|-1$ and
  $\mathit{u_2} = i-1 + \min\RnOcc(T[i:|T|],P)$.
  Then $\NOcc(T,P) = |\LnOcc(T[1:u_1], P)| + \NOcc(T[u_1+1:u_2-1], P) + |\RnOcc(T[u_2:|T|], P)|$.
\end{lemma}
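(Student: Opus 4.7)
The plan is to decompose $\NOcc(T,P)$ via two successive greedy analyses: a left-greedy split of $T$ at $u_1$, followed by a right-greedy split of the resulting suffix $T[u_1+1:|T|]$ at $u_2$. Throughout I use Lemma~\ref{lem:left_right_max_nonoverlap} to freely interchange $\NOcc(\cdot,P)$ with $|\LnOcc(\cdot,P)|$ or $|\RnOcc(\cdot,P)|$.

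First, I establish $\NOcc(T,P)=a+\NOcc(T[u_1+1:|T|],P)$, where $a=|\LnOcc(T[1:u_1],P)|$. The definition of $u_1$ ensures that the left-greedy sweep on $T[1:i-1]$ ends with an occurrence terminating at position $u_1$, so in particular $\Occ(T[u_1+1:i-1],P)=\emptyset$. Unrolling the recursive definition of $\LnOcc(T,P)$ then yields that its first $a$ picks coincide with $\LnOcc(T[1:u_1],P)$, while the remaining picks all start at positions $\geq u_1+1$ and equal $u_1\oplus\LnOcc(T[u_1+1:|T|],P)$. Taking cardinalities and applying Lemma~\ref{lem:left_right_max_nonoverlap} gives the claim.

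Next, I establish $\NOcc(T[u_1+1:|T|],P)=b+c$ by running right-greedy and decomposing at $u_2$. The analogous fact $\Occ(T[i:u_2-1],P)=\emptyset$ implies that right-greedy on $T[u_2:|T|]$ and on $T[i:|T|]$ produce the same picks in $T$-coordinates, with the leftmost at $T$-position $u_2$---i.e., at local position $1$ of $T[u_2:|T|]$. On $T[u_1+1:|T|]$, every occurrence of $P$ inside $T[u_2:|T|]$ starts strictly to the right of every occurrence inside $T[u_1+1:u_2-1]$ or crossing $u_2$, so the right-greedy sweep must pick all $c$ occurrences of $\RnOcc(T[u_2:|T|],P)$ first. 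After the final such pick lands at $T$-position $u_2$, the fact that $r_c=1$ in local coordinates means the recursion descends exactly onto $T[u_1+1:u_2-1]$, where it contributes the remaining $b$ picks, by Lemma~\ref{lem:left_right_max_nonoverlap}.

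Combining the two steps gives $\NOcc(T,P)=a+b+c$, which is the desired identity. The main obstacle is the careful bookkeeping required to argue that occurrences of $P$ crossing the boundaries $u_1$ and $u_2$ never disrupt either greedy decomposition; this reduces to the two equalities $\Occ(T[u_1+1:i-1],P)=\Occ(T[i:u_2-1],P)=\emptyset$ (both immediate from the termination conditions defining $u_1$ and $u_2$), together with the observation that the leftmost right-greedy pick of $T[u_2:|T|]$ sits at local position $1$.
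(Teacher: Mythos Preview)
Your proof is correct and follows essentially the same approach as the paper's: first split $\LnOcc(T,P)$ at $u_1$, use Lemma~\ref{lem:left_right_max_nonoverlap} to switch the suffix piece to $\RnOcc$, then split that at $u_2$, and convert the middle term back to $\NOcc$. The paper presents this as a bare chain of four equalities justified only ``by the definitions,'' whereas you spell out why the greedy decompositions are compatible with the chosen cut points (in particular the two emptiness facts $\Occ(T[u_1+1:i-1],P)=\emptyset$ and $\Occ(T[i:u_2-1],P)=\emptyset$, and the observation that the leftmost right-greedy pick of $T[u_2:|T|]$ sits at local position~$1$); but the underlying argument is identical.
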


\begin{proof}
By Lemma~\ref{lem:left_right_max_nonoverlap} and the definitions of
$u_1$, $u_2$, $\LnOcc$ and $\RnOcc$, we have 
\begin{eqnarray*}
 \lefteqn{\NOcc(T,P)} \\ 
 & = & |\LnOcc(T[1:u_1], P)| + |\LnOcc(T[u_1+1:|T|], P)| \\
 & = & |\LnOcc(T[1:u_1], P)| + |\RnOcc(T[u_1+1:|T|], P)| \\
 & = & |\LnOcc(T[1:u_1], P)| \! + \! |\RnOcc(T[u_1\!+\!1:u_2\!-\!1], P)| \! + \! |\RnOcc(T[u_2:|T|], P)| \\
 & = & |\LnOcc(T[1:u_1], P)| + \NOcc(T[u_1\!+\!1:u_2-1], P) + |\RnOcc(T[u_2:|T|], P)|.
\end{eqnarray*}
\qed
\end{proof}

We will later make use of the solution to the following problem,
where occurrences of $q$-grams are weighted and allowed to overlap.

\begin{problem}[weighted overlapping $q$-gram frequencies] \label{prob:weighted-q-gram_frequencies}
  Given a string $T$, an integer $q$, and integer array $w$ ($|w| = |T|$),
  compute $\sum_{i \in Occ(T, P)} w[i]$ for all $q$-grams $P \in
  \Sigma^q$ where $\Occ(T,P)\neq\emptyset$.
\end{problem}

\begin{theorem}[\cite{goto10:_fast_minin_slp_compr_strin}]\label{thm:weighted_q_gram}
  Problem~\ref{prob:weighted-q-gram_frequencies} can be solved in $O(|T|)$ time.
\end{theorem}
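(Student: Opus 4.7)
The plan is to reduce the problem to a standard suffix-array / LCP-array computation on $T$. First I would build the suffix array $\SA$ and the $\LCP$ array of $T$ in $O(|T|)$ time using a linear-time construction (e.g.\ DC3 for an integer alphabet, together with Kasai et al.'s linear-time $\LCP$ construction). All subsequent work consists of a single left-to-right scan over $\SA$ and $\LCP$.

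The key observation is that two text positions $i,j$ start the same $q$-gram (and this $q$-gram actually exists, i.e.\ $i,j\le|T|-q+1$) if and only if the suffixes $T[i:|T|]$ and $T[j:|T|]$ share a common prefix of length at least $q$. Equivalently, if $i=\SA[a]$ and $j=\SA[b]$ with $a<b$, then $\min_{a<k\le b}\LCP[k]\ge q$. Hence the positions sharing a fixed $q$-gram correspond precisely to a maximal run $[l,r]$ in $\SA$ such that $\LCP[k]\ge q$ for every $l<k\le r$ and such that every position in the run satisfies $\SA[k]\le|T|-q+1$. The distinct $q$-grams with $\Occ(T,P)\ne\emptyset$ are in one-to-one correspondence with these maximal runs, so there are at most $|T|-q+1$ of them.

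The algorithm then scans $\SA$ once, maintaining the running sum $\sum_{k\in[l,\mathit{cur}]} w[\SA[k]]$ of weights inside the currently open run. Whenever $\LCP[k]<q$ (or the next suffix is too short to contain a $q$-gram), the current run $[l,r]$ is closed: we output the $q$-gram $T[\SA[l]:\SA[l]+q-1]$ (which can be represented by the pair $(\SA[l],q)$, or hashed in $O(1)$ per character using precomputed tables) together with the accumulated weighted sum, and we then open a new run at the next valid position. Each entry of $\SA$ and $\LCP$ is touched a constant number of times, and each run is emitted once, giving $O(|T|)$ total time.

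The only mildly delicate point — and really the main place one has to be careful — is the representation/output of the distinct $q$-grams themselves in $O(|T|)$ total time independent of $q$: writing out the whole $q$-gram once per run would cost $O(q|T|)$, so each $q$-gram is represented implicitly by the pair $(\SA[l],q)$ pointing into $T$. With that convention the linear-time bound is immediate, which yields Theorem~\ref{thm:weighted_q_gram}.
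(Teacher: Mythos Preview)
Your proposal is correct and follows essentially the same approach as the paper: build the suffix array and $\LCP$ array of $T$ in linear time, then make a single scan partitioning $\SA$ into maximal runs with $\LCP\ge q$ and summing $w[\SA[k]]$ within each run. You add a bit more care about short suffixes and about representing the output $q$-grams implicitly, but the underlying argument is identical.
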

\begin{proof}
  See Appendix.
\end{proof}

\subsection{Straight Line Programs}
In this paper, we treat strings described in terms of 
\emph{straight line programs} ({\em SLPs}).
A straight line program $\mathcal T$ is a sequence of assignments
$\{ X_1 = expr_1,$ $X_2 = expr_2, \ldots, X_n = expr_n\}$.
Each $X_i$ is a variable and each $expr_i$ is an 
expression where $expr_i = a$ ($a\in\Sigma$), or
$expr_i = X_{\ell} X_r$ ($\ell,r < i $).
We will sometimes abuse notation and denote $\mathcal{T}$ as $\{X_i\}_{i=1}^n$.
Denote by $T$ the string derived from the last variable $X_n$
of the program $\mathcal T$.
Fig.~\ref{fig:SLP} shows an example of an SLP.
The \emph{size} of the program $\mathcal T$ is the number $n$ of
assignments in $\mathcal T$.

\begin{wrapfigure}[15]{r}{0.5\textwidth}
   \vspace{-1.0cm}
  \centerline{\includegraphics[width=0.5\textwidth]{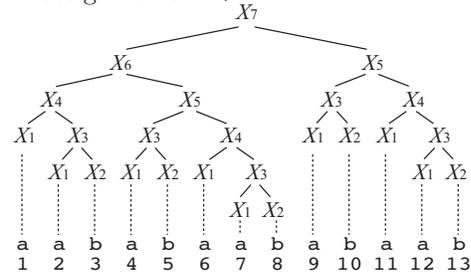}}
  \caption{
    The derivation tree of
    SLP $\mathcal{T} =
    \{
    X_1 = \mathtt{a},
    X_2 = \mathtt{b},
    X_3 = X_1X_2, 
    X_4 = X_1X_3,
    X_5 = X_3X_4,
    X_6 = X_4X_5,
    X_7 = X_6X_5\}$,
    which represents string $T = \derive(X_7) = \mathtt{aababaababaab}$.
  }
  \label{fig:SLP}
\end{wrapfigure}

Let $\derive(X_i)$ represent the string derived from $X_i$.
When it is not confusing, we identify a variable $X_i$
with $\derive(X_i)$.
Then, $|X_i|$ denotes the length of the string $X_i$ derives,
and
$X_i[j] = \derive(X_i)[j]$, 
$X_i[j:k] = \derive(X_i)[j:k]$ for $1 \leq j,k\leq |X_i|$.
Let $\VarOcc(X_i)$ denote the number of times a variable $X_i$ occurs in the derivation of $T$.
For example,  $\VarOcc(X_4) = 3$ in Fig.~\ref{fig:SLP}.

Both $|X_i|$ and $\VarOcc(X_i)$ can be computed 
for all $1 \leq i \leq n$ in a total of $O(n)$ time by a simple
iteration on the variables:
$|X_i| = 1$ for any $X_i = a~ (a \in \Sigma)$, and
$|X_i| = |X_\ell|+|X_r|$ for any $X_i = X_\ell X_r$.
Also, $\VarOcc(X_n) = 1$ and for $i < n$,
$\VarOcc(X_i) =
\sum \{ \VarOcc(X_k) \mid X_k = X_\ell X_i \}
+
\sum \{\VarOcc(X_k) \mid X_k = X_iX_r \}$.

  We shall assume as in various previous work on SLP, 
  that the word size is at least $\log |T|$, and hence,
  values representing lengths and positions of $T$
  in our algorithms can be manipulated in
  constant time.

\section{$q$-gram Non-Overlapping Frequencies on Compressed String}

The goal of this paper is to efficiently solve the following problem.
\begin{problem}[Non-overlapping $q$-gram frequencies on SLP] \label{prob:q-gram_frequencies}
Given an SLP $\mathcal{T}$ of size $n$ 
that describes string $T$ and a positive integer $q$, 
compute $\NOcc(T, P)$ for all $q$-grams $P \in \Sigma^q$.
\end{problem}

If we decompress the given SLP $\mathcal{T}$ obtaining the string $T$,
then we can solve the problem in $O(|T|)$ time.
However, 
it holds that $|T| = O(2^n)$.
Hence, in order to solve the problem efficiently,
we have to establish an algorithm that does not explicitly decompress 
the given SLP $\mathcal{T}$.

\subsection{Key Ideas}

For any variable $X_i$ and integer $k \geq 1$, 
let $\prefix(X_i,k) = X_i[1:\min\{k, |X_i|\}]$ and 
$\suffix(X_i, k) = X_i[|X_i| - \min\{k, |X_i|\} + 1:|X_i|]$.
That is, $\prefix(X_i,k)$ and $\suffix(X_i, k)$ are 
the prefix and the suffix of $\derive(X_i)$ of length $k$, respectively.
For all variables $X_i$,
$\prefix(X_i,k)$ can be computed in a total of $O(nk)$ time and space, as follows:
\[
 \prefix(X_i, k) = 
  \begin{cases}
   \derive(X_i) & \mbox{if } |X_i| \leq k, \\
   \prefix(X_\ell, k)\prefix(X_r, k-|X_\ell|) & \mbox{if } X_i = X_\ell X_r \mbox{ and } |X_\ell| < k < |X_i|, \\
   \prefix(X_\ell, k) & \mbox{if } X_i = X_\ell X_r \mbox{ and } k \leq |X_\ell|.
  \end{cases}
\]
$\suffix(X_i, k)$ can be computed similarly in $O(nk)$ time and space.

For any string $T$ and positive integers $q$ and $j$ ($1 \leq j \leq j+q-1 \leq |T|$),
the \emph{longest overlapping cover} of the $q$-gram $P=T[j:j+q-1]$ w.r.t. position $j$ of $T$ is 
an ordered pair $\loc_{q}(T, j) = (b,e)$ of positions in $T$
which is defined as:
\begin{eqnarray*}
&&\loc_{q}(T,j) = \arg\max_{(b,e)}\\
&&
\left\{
 (e-b) ~\left\vert~
\begin{array}{l}
  (b,e)\in\Occ(T,P)\times((q-1)\oplus\Occ(T,P)),\\
  {b\leq j\leq j+q-1\leq e},\\
  \forall k \in [b:e-q]\cap\Occ(T,P),\\
  ~~~~     [k+1:\min\{k+q-1,e-q+1\}]\cap\Occ(T,P)\neq\emptyset
\end{array} \right\}\right.
\end{eqnarray*}
Namely, 
$\loc_q(T, j)$ represents 
the beginning and ending positions of the maximum chain of overlapping occurrences of $q$-gram $T[j:j+q-1]$
that contains position $j$.
For example, consider string $T = \mathtt{aaabaabaaabaabaaaabaa}$ of length 21.
For $q = 5$ and $j = 9$,
we have $\loc_q(T, j) = (2, 16)$, since $T[2:6] = T[5:9] = T[9:13] = T[12:16] = \mathtt{aabaa}$.
Note that $T[17:21] = \mathtt{aabaa}$ is not contained in this chain since 
it does not overlap with $T[12:16]$.

\begin{lemma}
  \label{lem:longest_cover}
Given a string $T$ and integers $q, j$,
the longest overlapping cover $\loc_{q}(T, j)$ can be computed in $O(|T|)$ time.
\end{lemma}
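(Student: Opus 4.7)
The plan is to first compute all occurrences of the $q$-gram $P = T[j:j+q-1]$ in $T$, and then find the longest span $[b,e]$ around position $j$ induced by a chain of consecutive overlapping occurrences. The first step can be done by running any linear-time string-matching algorithm (e.g.\ Knuth--Morris--Pratt) for $P$ against $T$, taking $O(|P|+|T|)=O(|T|)$ time and producing the occurrences in sorted order $k_1<k_2<\cdots<k_m$. Since $P$ starts at position $j$ by construction, $j\in\Occ(T,P)$, so $k_i=j$ for some index $i$.

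Next, I would reinterpret the defining condition of $\loc_q(T,j)$ in terms of this sorted list. Two occurrences $k<k'$ overlap iff $k'-k\leq q-1$. For any candidate pair $(b,e)$, let $k_a,k_{a+1},\ldots,k_z$ denote the sorted occurrences of $P$ lying in $[b:e-q+1]$. I claim $(b,e)$ satisfies the conditions in the definition of $\loc_q$ exactly when $b=k_a$, $e=k_z+q-1$, $a\leq i\leq z$, and $k_{t+1}-k_t\leq q-1$ for every $a\leq t<z$. The nontrivial direction uses the condition applied at $k_t$, which demands some occurrence in $[k_t+1:\min\{k_t+q-1,\,k_z\}]$; because the smallest occurrence strictly above $k_t$ is $k_{t+1}$, this forces $k_{t+1}\leq k_t+q-1$, so no intermediate occurrence can be skipped and all consecutive gaps are bounded by $q-1$.

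Given this equivalence, maximizing $e-b=(k_z-k_a)+q-1$ reduces to growing the index range $[a,z]$ around $i$ as far as the gap constraint permits. I would do so with two linear scans in the sorted occurrence array starting from index $i$: extend $z$ rightward while $k_{z+1}-k_z\leq q-1$, and symmetrically extend $a$ leftward while $k_a-k_{a-1}\leq q-1$. Each occurrence is visited at most once, so this step runs in $O(m)=O(|T|)$ time, and the algorithm outputs $(b,e)=(k_a,\,k_z+q-1)$.

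The main obstacle is justifying the equivalence in the second step, namely that the $\min$-condition in the definition of $\loc_q$ is exactly captured by requiring a contiguous run of sorted occurrences with consecutive gaps at most $q-1$; once this equivalence is established, the linear-time implementation is routine.
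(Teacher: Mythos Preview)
Your proposal is correct and follows essentially the same approach as the paper's proof: compute the sorted occurrence list of $P=T[j:j+q-1]$ via KMP in $O(|T|)$ time, then scan outward from $j$ in both directions to find the maximal run of overlapping occurrences. The paper states this in a single sentence without justifying the equivalence you spell out, so your version is simply a more detailed rendering of the same argument.
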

\begin{proof}
Using, for example, the KMP algorithm~\cite{KMP77},
we can obtain a sorted list of $\Occ(T,T[j:j+q-1])$ in $O(|T|)$ time.
We can just scan this list forwards and backwards, to easily obtain $b$ and $e$.
\qed
\end{proof}

For a variable $X_i = X_\ell X_r$ and a position $1 \leq j \leq |X_i|-q+1$,
a longest overlapping cover $(b, e) = \loc_q(X_i, j)$ is said to be \emph{closed in $X_i$}
if
$q - 1 < b$  and $e < |X_i| - q + 2$.

\begin{theorem} \label{theo:q-gram}
  Problem~\ref{prob:q-gram_frequencies} can be solved in $O(q^2n)$ time,
  provided that, for all variables $X_i=X_\ell X_r$ and $j$ s.t.
  $|X_i| \geq q$ and
  $\max\{1,|X_\ell|-2(q-1)+1\} \leq j \leq \min\{|X_\ell|+q-1,|X_i|-q+1\}$,
  $(b, e) = \loc_q(X_i, j)$ and $\NOcc(X_i[b:e],s)$
  are already computed where $s = X_i[j:j+q-1]$.
\end{theorem}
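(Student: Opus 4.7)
My approach is a bottom-up traversal of the SLP that attributes each non-overlapping occurrence to the $X_\ell|X_r$ boundary of the variable $X_i = X_\ell X_r$ it crosses. Applying Lemma~\ref{lemma:nocc} at position $|X_\ell|+1$ of $X_i$, and noting that the cutpoints satisfy $u_1 \leq |X_\ell| < u_2$, I obtain the local identity
\[
\NOcc(X_i, P) = \NOcc(X_\ell, P) + \Delta(X_i, P) + \NOcc(X_r, P),
\]
with $\Delta(X_i, P) := \NOcc(X_i[u_1+1:u_2-1], P)$. Unfolding this recurrence along the derivation tree of $T = X_n$ and counting multiplicities gives
\[
\NOcc(T, P) = \sum_{i=1}^n \VarOcc(X_i) \cdot \Delta(X_i, P),
\]
reducing the problem to computing $\Delta(X_i, P)$ for each variable and each crossing $q$-gram.

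A key observation is that $\Delta(X_i, P) \in \{0, 1\}$: any occurrence entirely within $X_\ell$ (resp.\ $X_r$) in the middle region $X_i[u_1+1:u_2-1]$ would be a legitimate greedy extension of $\LnOcc(X_\ell, P)$ (resp.\ $\RnOcc(X_r, P)$), contradicting its maximality; so the middle region contains only crossing occurrences, and the at most $q-1$ crossing start positions lie in a length-$(q-1)$ window and hence pairwise overlap. For each variable I would enumerate the $O(q)$ crossing positions by scanning $\suffix(X_\ell, q-1) \cdot \prefix(X_r, q-1)$ (precomputable in $O(qn)$ total), extract each $P = X_i[j:j+q-1]$ in $O(q)$ time, and look up the precomputed $(b,e) = \loc_q(X_i, j)$ and $\NOcc(X_i[b:e], P)$.

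To compute $\Delta(X_i, P)$ I would reapply Lemma~\ref{lemma:nocc} inside the substring $X_i[b:e]$ at its own boundary position $|X_\ell| - b + 2$. By chain maximality, pre-chain occurrences of $P$ in $X_\ell$ (resp.\ $X_r$) are separated from chain occurrences by a gap of at least $q-1$ positions, so the greedy left-priority (resp.\ right-priority) selection restricted to $X_i[b:|X_\ell|]$ (resp.\ $X_i[|X_\ell|+1:e]$) coincides with the rightmost elements of $\LnOcc(X_\ell, P)$ (resp.\ leftmost of $\RnOcc(X_r, P)$); the new cutpoints thus agree with the original $u_1, u_2$, yielding
\[
\Delta(X_i, P) = \NOcc(X_i[b:e], P) - \NOcc(X_i[b:|X_\ell|], P) - \NOcc(X_i[|X_\ell|+1:e], P).
\]
The first term is given; for the other two I would use that a maximal overlapping chain of $P$ sits inside a $P$-periodic run, so by Fine--Wilf its positions form an arithmetic progression whose step equals the smallest period of $P$, computable from $P$ in $O(q)$ time via the KMP failure function. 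Each sub-count is then a closed-form expression in the step size, the endpoints $(b,e)$, and the boundary $|X_\ell|$, evaluable in $O(1)$.

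Finally, to output $\NOcc(T, P)$ for all $q$-grams I would concatenate the boundary windows of all variables with fresh sentinels into a summary string of length $O(qn)$, place weight $\VarOcc(X_i) \cdot \Delta(X_i, P)$ at each crossing position, and invoke Theorem~\ref{thm:weighted_q_gram} to aggregate per distinct $q$-gram in $O(qn)$ total time. The main obstacle will be a clean derivation of the closed forms for $\NOcc(X_i[b:|X_\ell|], P)$ and $\NOcc(X_i[|X_\ell|+1:e], P)$ in all cases, particularly when $(b,e)$ is not closed in $X_i$ and the chain degenerates at an edge; careful case analysis on the position of $(b,e)$ relative to $1$ and $|X_i|$ handles this, aided by the $\loc_q$ and $\NOcc$ values at the wider $j$-range guaranteed by the hypothesis. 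The total cost is $n$ variables $\times$ $O(q)$ crossings each $\times$ $O(q)$ work per crossing, i.e. $O(q^2 n)$.
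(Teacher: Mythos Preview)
Your decomposition $\NOcc(T,P)=\sum_i \VarOcc(X_i)\,\Delta(X_i,P)$ with $\Delta(X_i,P)\in\{0,1\}$, and the identity
\[
\Delta(X_i,P)=\NOcc(X_i[b:e],P)-\NOcc(X_i[b:|X_\ell|],P)-\NOcc(X_i[|X_\ell|+1:e],P),
\]
are both correct (occurrences outside $[b,e-q+1]$ lie at distance $\geq q$ from the chain, so all three of $\NOcc(X_i,P)$, $\NOcc(X_\ell,P)$, $\NOcc(X_r,P)$ split additively at $b$ and $e$, and the outside pieces cancel). This is a genuinely different route from the paper, which instead charges the full value $\NOcc(X_i[b:e],s)$ to the unique variable in which the chain first becomes \emph{closed} (i.e.\ $q-1<b$ and $e<|X_i|-q+2$), and therefore never needs to split a chain at an interior boundary.

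The gap is in how you evaluate the two subtracted terms. Your claim that the occurrence positions inside a maximal overlapping chain form an arithmetic progression with step equal to the smallest period of $P$ is false. Take $P=\texttt{abaaba}$ ($q=6$): its periods are $3$ and $5$, and since $3+5-\gcd(3,5)=7>6$, Fine--Wilf does not force a smaller common period. In the string $\texttt{abaabaababaaba}$ (length $14$) the pattern $P$ occurs exactly at positions $1,4,9$, a single maximal overlapping chain with gaps $3$ and $5$; the chain text itself has no short period. Thus your proposed $O(1)$ closed form from the KMP failure function does not compute $\NOcc(X_i[b:|X_\ell|],P)$ or $\NOcc(X_i[|X_\ell|+1:e],P)$, and the theorem's hypothesis supplies $\NOcc$ only for full chains $(b,e)=\loc_q(X_i,j)$, not for their truncations at $|X_\ell|$. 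The paper's ``closed'' test is precisely what lets it avoid this chain-splitting problem.
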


\begin{proof}
  Algorithm~\ref{algo:slpmain} shows a pseudo-code of our algorithm 
  to solve Problem~\ref{prob:q-gram_frequencies}.

  Consider $q$-gram $s = X_i[j:j+q-1]$ at position $j$
  for which $(b,e) = \loc_q(X_i, j)$ is closed in $X_i$.
  A key observation is that, if $(b,e)$ is closed in $X_i$, 
  then $(b,e)$ is never closed in $X_\ell$ or $X_r$.
  Therefore, by summing up $\VarOcc(X_i) \cdot \NOcc(X_i[b:e],s)$
  for each closed $(b,e)$ in $X_i$, for all such variables $X_i$, we
  obtain $\NOcc(T, s)$.
  Line~\ref{line:closed} 
  is sufficient to check if $(b,e)$ is closed.
  
  For all $1 \leq i \leq n$, 
  $\VarOcc(X_i)$ can be computed in $O(n)$ time,
  and 
  $t_i = \prefix(X_i, 2(q-1))\suffix(X_i, 2(q-1))$ 
  can be computed in $O(qn)$ time and space.
  The problem amounts to summing up the values of
  $\VarOcc(X_i) \cdot \NOcc(X_i[b:e],s)$ for each $q$-gram $s$
  contained in each $t_i$,
  and can be reduced to Problem~\ref{prob:weighted-q-gram_frequencies}
  on string $z$ and integer array $w$ of length $O(qn)$,
  which can be solved in $O(qn)$ time by Theorem~\ref{thm:weighted_q_gram}.

  In line~\ref{line:condition_avoid_double},
  we check if there is no previous position $h$ ($\max\{1, |X_\ell|-2(q-1)+1\} \leq h < j$) such that
  $X_i[h:h+q-1] = X_i[j:j+q-1]$
  by $\loc_q(X_i, h) = \loc_q(X_i, j)$,
  so that we do not count the same $q$-gram more than once.
  If there is no such $h$, we set the value of $w_i[k-|X_\ell|+j]$ to 
  $\VarOcc(X_i) \cdot \NOcc(X_i[b:e],s)$.
  This can be checked in $O(q^2n)$ time for all $X_i$ and $j$.
  
  For convenience, we assume that $T=\derive(X_n)$ starts and ends
  with special characters
  $\#^{q-1}$ and $\$^{q-1}$ that do not occur anywhere else in $T$, respectively.
  Then we can cope with the last variable $X_n$ as described above.
  Hence the theorem holds.
  \qed

\end{proof}

\begin{algorithm2e}[t]
  \caption{Computing $q$-gram non-overlapping frequencies from SLP}
  \label{algo:slpmain}
  \SetKw{KwAnd}{and}
  \SetKw{KwOr}{or}
  \SetKw{KwTrue}{true}
  \SetKw{KwFalse}{false}
  \KwIn{SLP ${\mathcal T} = \{X_i\}_{i=1}^n$ representing string $T$, integer $q\geq 2$.}
  \KwOut{$\NOcc(T, P)$ for all $q$-grams $P \in \Sigma^q$ where $\Occ(T,P)\neq\emptyset$.}
  Compute $\VarOcc(X_i)$ for all $1\leq i\leq n$\; 
  Compute $\prefix(X_i,2(q-1))$ and $\suffix(X_i,2(q-1))$ for all $1\leq
  i\leq n-1$\; 
  $z \leftarrow \varepsilon$; $w \leftarrow []$\;
  \For{$i \leftarrow 1$ \KwTo $n$}{
   \If{$|X_i| \geq q$}{
    let $X_i = X_\ell X_r$\;
    $k \leftarrow |\suffix(X_\ell,2(q-1))|$\;
    $t_i = \suffix(X_\ell,2(q-1))\prefix(X_r,2(q-1))$\;
    $z$.append($t_i$)\;
    $w_i \leftarrow$ create integer array of length $|t_i|$, each element set to $0$\;
    \For{$j \leftarrow \max\{1, |X_\ell|-2(q-1)+1\}$ \KwTo $\min\{|X_\ell|+q-1, |X_i|-q+1\}$\label{line:loop_xi}}{
      $s \leftarrow X_i[j:j+q-1]$\;
      $(b, e) \leftarrow \loc_q(X_i, j)$\; \label{line:be_xi}
      \If{$q-1 < b \mbox{ \KwAnd\ } e < |X_i|-q+2$\label{line:closed}}{
        
        \If{$\loc_q(X_i, h) \neq \loc_q(X_i, j)$ for any position $h$ s.t. $\max\{1, |X_\ell|-2(q-1)+1\} \leq h < j$\label{line:condition_avoid_double}}{
        $w_i[k - |X_\ell| + j] \leftarrow \VarOcc(X_i) \cdot \NOcc(X_i[b:e], s)$\; \label{line:add}
       }
      }
     }
    }
    $w$.append($w_i$)\;
   }
  Calculate $q$-gram frequencies in $z$,
  where each $q$-gram starting at position $d$ is
  {\em weighted} by $w[d]$.\label{algo:slpqgrammainweightedfreqs} 
\end{algorithm2e}

\subsection{Computing Longest Overlapping Covers}

In this subsection, we will show how 
to compute longest overlapping cover $(b, e) = \loc_q(X_i, j)$ 
where $s = X_i[j:j+q-1]$ for all $X_i$ and
all $j$ required for Theorem~\ref{theo:q-gram}.

For any string $T$ and integers $q$ and $j$ ($1 \leq j < q$), let
\begin{eqnarray*}
 \lpc_q(T, j) & = & 
  \begin{cases}
   (j, \be) & \mbox{if } j + q-1 \leq |T|, \\
   (j,|T|) & \mbox{otherwise},
  \end{cases} \\
  \lsc_q(T, j) & = &
  \begin{cases}
   (\eb, |T|-j+1) & \mbox{if } |T|-j-q+2 \geq 1, \\
   (1, |T|-j+1) & \mbox{otherwise},
  \end{cases}   
\end{eqnarray*}
where $(j, \be) = (j-1) \oplus \loc_q(T[j:|T|], 1)$
and $(\eb, |T|-j+1) = \loc_q(T[1:|T|-j+1], |T|-j-q+2)$.
Namely, $\lpc_q(T, j)$ is a suffix of the longest overlapping cover 
of the $q$-gram $T[j:j+q-1]$ that begins at position $j$ ($1 \leq j < q$) in $T$,
and $\lsc_q(T, j)$ is a prefix of the longest overlapping cover 
of the $q$-gram $T[|T|-j-q+2:|T|-j+1]$ that ends at position $|T|-j+1$ in $T$.

\begin{lemma} \label{lem:lpc}
For all $1 \leq i \leq n$ and $1 \leq j \leq 2(q-1)$,
$\lpc_q(X_i, j)$ can be computed in a total of $O(q^2n)$ time.
\end{lemma}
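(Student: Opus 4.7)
The plan is to compute $\lpc_q(X_i, j)$ by induction on $i$, maintaining the invariant that once $X_i$ is processed all values $\lpc_q(X_i,j)$ with $j\in[1,2(q-1)]$ are stored. Since each variable carries $O(q)$ such values, we must show that each one can be extracted in $O(q)$ time from (a) a local window of length $O(q)$ around the split of $X_i=X_\ell X_r$, and (b) at most one previously computed $\lpc$-value from $X_\ell$ and one from $X_r$. This yields $O(q)\cdot O(q)=O(q^2)$ work per variable and $O(q^2n)$ overall; the storage is $O(qn)$.

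First I would precompute $\prefix(X_i,2(q-1))$, $\suffix(X_i,2(q-1))$ and $|X_i|$ for every variable in $O(qn)$ time as in the earlier part of the paper, and form $t_i = \suffix(X_\ell,2(q-1))\,\prefix(X_r,2(q-1))$ of length at most $4(q-1)$. For a given $j\in[1,2(q-1)]$, if $j+q-1>|X_i|$ the definition directly gives $\lpc_q(X_i,j)=(j,|X_i|)$; otherwise let $P=X_i[j:j+q-1]$ and split on whether $P$ lies inside $X_\ell$ or already straddles the boundary. If $j+q-1\le|X_\ell|$, then $\lpc_q(X_\ell,j)=(j,\be_\ell)$ is already available; moreover, if $\be_\ell\le |X_\ell|-q+1$, every candidate continuation position $k\in[\be_\ell-q+2,\be_\ell]$ would give a $q$-gram occurrence entirely inside $X_\ell$, so the chain in $X_\ell$ coincides with the chain in $X_i$ and $\lpc_q(X_i,j)=(j,\be_\ell)$. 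Otherwise $\be_\ell$ sits in the boundary zone, and we must bridge across the split.

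The bridging step uses only the string $t_i$: with one KMP-like scan in $O(|t_i|+q)=O(q)$ time I enumerate all occurrences of $P$ in the window $X_i[|X_\ell|-2(q-1)+1:|X_\ell|+2(q-1)]$, which provably includes every occurrence that could participate in extending the chain past the split (any such occurrence starts within $q-1$ of the boundary). Scanning these occurrences greedily from $\be_\ell-q+1$ rightward, at gaps of at most $q-1$, either the chain dies inside the window (giving $\lpc_q(X_i,j)$ directly) or it produces a first occurrence $k'>|X_\ell|$ that is entirely in $X_r$. Its predecessor $k''$ in the chain satisfies $k''\le|X_\ell|$ and $k'\le k''+q-1$, so $k'-|X_\ell|\le q-1\le 2(q-1)$, meaning $\lpc_q(X_r,k'-|X_\ell|)=(k'-|X_\ell|,\be_r)$ is available by induction. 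Since every subsequent chain element lies strictly inside $X_r$, one sees that $\lpc_q(X_i,j)=(j,|X_\ell|+\be_r)$. The boundary-crossing case $|X_\ell|<j+q-1\le|X_i|$ is handled identically, starting the scan from $P$'s own occurrence at $j$ inside $t_i$ rather than from $\lpc_q(X_\ell,j)$.

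The main obstacle is the bridging argument: one has to verify (i) that scanning the $O(q)$-length window $t_i$ suffices to detect every chain extension that crosses the split, and (ii) that the hand-off point $k'-|X_\ell|$ into $X_r$ is always at most $2(q-1)$, so that the needed value $\lpc_q(X_r,k'-|X_\ell|)$ is among those already computed. Both follow from the defining property that consecutive chain occurrences are within distance $q-1$ of one another. Once this is established the complexity analysis is routine: $O(q)$ positions per variable, $O(q)$ work each, $O(n)$ variables, giving $O(q^2n)$ time in total.
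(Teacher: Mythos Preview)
Your proposal is correct and follows essentially the same dynamic-programming approach as the paper: inherit $\lpc_q(X_\ell,j)=(j,\be_\ell)$, bridge across the split using occurrences of $P$ in an $O(q)$-length window around the boundary, and splice in $\lpc_q(X_r,k)$ for a suitable $k\le 2(q-1)$. The paper phrases the bridging step slightly differently---it works in the narrower window $X_i[\max\{j,|X_\ell|-q+2\}:\min\{|X_i|,|X_\ell|+q-1\}]$, computes its local $\loc_q$ value $(b_i,e_i)$, and then tests the two overlap conditions $b_i\le\be_\ell$ and $e_i\ge \bb_r$ where $\bb_r$ is the leftmost occurrence in $\prefix(X_r,2(q-1))$---whereas you scan the wider window $t_i$ greedily and take the first chain occurrence $k'>|X_\ell|$; but one checks that the paper's $\bb_r$ and your $k'$ coincide whenever the chain actually crosses into $X_r$, so the recursions and the $O(q)$-per-pair, $O(q^2n)$-total bound are the same.
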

\begin{proof}

  We use dynamic programming.
  Let $X_i = X_\ell X_r$, $\bstr_j=X_i[j:j+q-1]$, and assume 
  $\lpc_q(X_\ell,j)$ and $\lpc_q(X_r,j)$ have been calculated for
  all $1 \leq j \leq 2(q-1)$.
  We examine the string $X_i[\max\{j,|X_\ell|-q+2\}:\min\{|X_i|,|X_\ell|+q-1\}]$ for
  occurrences of $\bstr_j$ that cross $X_\ell$ and $X_r$,
  obtain its longest overlapping cover
  $(b_i,e_i)$, and check if it overlaps with $\lpc_q(X_\ell,j)$.
  Furthermore, let $\bb_r$ be the left most occurrence of 
  $\bstr_j$ in $X_r$ that has the possibility of overlapping with
  $(b_i,e_i)$.
  Then, $\lpc_q(X_i,j)$ is either $\lpc_q(X_\ell,j)$,
  or its end can be extended to $e_i$, or further to the end of
  $\lpc_q(X_r,\bb_r)$,
  depending on how the covers overlap.

  More precisely, let 
  $(j, \be_\ell) = \lpc_q(X_\ell, j)$,
  $(b_i, e_i) = \max\{j-1, |X_\ell|-q+1\} \oplus \loc_q(X_i[\max\{j, |X_\ell|-q+2\}:\min\{|X_i|, |X_\ell|+q-1\}],h)$
  where $h\in\Occ(X_i[\max\{j,$ $|X_\ell|-q+2\}:\min\{|X_i|, |X_\ell|+q-1\}],\bstr_j)$,
  and
  $(\bb_r, \be_r) = (|X_\ell|+k-1) \oplus\lpc_q(X_r, k)$ where $k = \min \Occ(\prefix(X_r, 2(q-1)), \bstr_j)$.
  (Note that $(\bb_r, \be_r), (b_i, e_i)$ are not
  defined if occurrences $h, k$ of $\bstr_j$ do not exist.)
  Then we have
  \[
  \lpc_q(X_i, j) = 
  \begin{cases}
    (j, \be_\ell) & \mbox{if } \be_\ell < b_i \mbox{ or } \not\exists h,\\
    (j, e_i) & \mbox{if } b_i \leq \be_\ell \mbox{ and } (e_i < \bb_r \mbox{ or } \not\exists k)\\
    (j, \be_r) & \mbox{otherwise}.
  \end{cases}
  \]
  (See also Fig.~\ref{fig:ploc} in Appendix.)
  For all variables $X_i$ we pre-compute $\prefix(X_i, 2(q-1))$ and $\suffix(X_i, 2(q-1))$.
  This can be done in a total of $O(qn)$ time. 
  Then, each $\lpc_q(X_i, j)$ can be computed in $O(q)$ time
  using the KMP algorithm, Lemma~\ref{lem:longest_cover},
  and the above recursion, giving a total of $O(q^2n)$
  time for all $1 \leq i \leq n$ and $1 \leq j \leq 2(q-1)$.
  \qed
\end{proof}
\begin{lemma} \label{lem:lsc}
For all $1 \leq i \leq n$ and $1 \leq j \leq 2(q-1)$,
$\lsc_q(X_i, j)$ can be computed in a total of $O(q^2n)$ time.
\end{lemma}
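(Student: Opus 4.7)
The plan is to exploit the obvious left--right symmetry between $\lsc_q$ and $\lpc_q$, and to reduce the problem to Lemma~\ref{lem:lpc} applied to a reversed SLP. First I would construct the reversed SLP $\rev{\mathcal{T}} = \{Y_i\}_{i=1}^{n}$, of the same size $n$, by setting $Y_i = a$ whenever $X_i = a \in \Sigma$, and $Y_i = Y_r Y_\ell$ (children swapped) whenever $X_i = X_\ell X_r$. A straightforward induction on $i$ shows $\derive(Y_i) = \rev{\derive(X_i)}$, and the whole construction takes $O(n)$ time.

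Next I would establish the identity that translates $\lpc_q$ on $\rev{\mathcal{T}}$ into $\lsc_q$ on $\mathcal{T}$. The reflection $k \mapsto |X_i| - k + 1$ is a bijection between positions of $X_i$ and $Y_i$; it sends an occurrence of a $q$-gram $P$ at position $k$ in $X_i$ to an occurrence of $\rev{P}$ at position $|X_i| - k - q + 2$ in $Y_i$, and it preserves the overlap relation, since whether two occurrences overlap depends only on their relative distance. Consequently, a maximal chain of overlapping occurrences of $\rev{P}$ in $X_i$ ending at position $|X_i| - j + 1$ corresponds bijectively to a maximal chain of overlapping occurrences of $P$ in $Y_i$ starting at position $j$, where $P = Y_i[j{:}j{+}q{-}1]$. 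Writing $\lpc_q(Y_i, j) = (j, c)$ for the right endpoint $c$ of the latter chain, this yields
\[
  \lsc_q(X_i, j) \;=\; (\,|X_i| - c + 1,\; |X_i| - j + 1\,).
\]

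With this identity in hand, the algorithm is: build $\rev{\mathcal{T}}$ in $O(n)$ time; apply Lemma~\ref{lem:lpc} to $\rev{\mathcal{T}}$ to obtain $\lpc_q(Y_i, j)$ for all $1 \leq i \leq n$ and $1 \leq j \leq 2(q-1)$ in $O(q^2 n)$ total time; then translate each value into $\lsc_q(X_i, j)$ in $O(1)$ time per entry. The total cost is $O(q^2 n)$, as required. The only point I expect to require genuine care is verifying that the reversal bijection preserves not just pairwise overlaps but also the ``no-gap'' maximality condition built into the definition of $\loc_q$, so that longest overlapping covers correspond to longest overlapping covers on the two sides; this should follow directly from the fact that both the overlap predicate and the linear order on occurrences behave symmetrically under $k \mapsto |X_i| - k + 1$.
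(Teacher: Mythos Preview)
Your argument is correct. Both proofs exploit the same left--right symmetry between $\lsc_q$ and $\lpc_q$, but they do so differently. The paper simply asserts that the proof is ``essentially the same'' as that of Lemma~\ref{lem:lpc}, meaning one rewrites the dynamic-programming recursion with the roles of prefix and suffix (and of $X_\ell$, $X_r$) interchanged. You instead make the symmetry explicit by constructing the reversed SLP $\rev{\mathcal{T}}$ and applying Lemma~\ref{lem:lpc} to it as a black box, then translating the answers via the reflection $k \mapsto |X_i|-k+1$. Your route has the advantage that nothing needs to be reproved: once Lemma~\ref{lem:lpc} is established, Lemma~\ref{lem:lsc} follows by a one-line identity plus $O(n)$ preprocessing, and you correctly identify the only delicate point (that the reflection preserves both overlaps and the maximality condition in the definition of $\loc_q$). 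The paper's route is shorter to state but implicitly asks the reader to redo the entire recursion; yours is longer to state but leaves no such obligation.
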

\begin{proof}
  The proof is essentially the same as the proof for $\lpc_q(X_i,j)$ in Lemma~\ref{lem:lpc}.  
\end{proof}

Recall that we have assumed in Theorem~\ref{theo:q-gram}
that $\loc_q(X_i, j)$ are already computed.
The following lemma describes how $\loc_q(X_i, j)$ can actually be computed 
in a total of $O(q^2n)$ time.
\begin{lemma} \label{lem:loc_kusi}
  For all variable $X_i=X_\ell X_r$ and $j$ s.t.
  $\max\{1, |X_\ell|-2(q-1)+1\} \leq j \leq \min\{|X_\ell|+q-1, |X_i|-q+1\}$,
  $(b, e) = \loc_q(X_i, j)$  
  can be computed in a total of $O(q^2n)$ time.
\end{lemma}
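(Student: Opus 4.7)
The plan is to compute $\loc_q(X_i, j)$ for each required $(X_i, j)$ in $O(q)$ amortized time by combining a single KMP scan over a short window around the $X_\ell/X_r$ boundary of $X_i = X_\ell X_r$ with $O(1)$ lookups into the values $\lsc_q(X_\ell, \cdot)$ and $\lpc_q(X_r, \cdot)$ already computed in Lemmas~\ref{lem:lpc} and~\ref{lem:lsc}. Since only $O(q)$ values of $j$ lie in the prescribed range per variable, the total running time will be $O(q^2 n)$.

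First I would form $t_i = \suffix(X_\ell, 2(q-1))\prefix(X_r, 2(q-1))$, a string of length $O(q)$. For a valid $j$, read $s = X_i[j:j+q-1]$ off of $t_i$, run KMP on $t_i$ in $O(q)$ time to obtain the sorted list of occurrences of $s$ inside $t_i$, and scan this list outward from the occurrence containing position $j$ to extract the longest run of occurrences whose consecutive gaps are at most $q-1$. Let $(b', e')$ be its endpoints translated back to $X_i$-coordinates; this is already the correct $\loc_q(X_i, j)$ whenever both endpoints sit strictly inside $t_i$.

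Second, I would extend the chain past the ends of $t_i$ when it is pinned to a boundary. If the leftmost occurrence at $b'$ lies entirely in $X_\ell$ (i.e.\ $b' + q - 1 \leq |X_\ell|$), any further leftward continuation of the chain cannot cross the $X_\ell/X_r$ boundary and is therefore exactly the chain captured by $\lsc_q(X_\ell, k)$ with $k = |X_\ell| - b' - q + 2$; one checks $k$ lies in $[1, 2(q-1)]$ because $b'$ was found inside the boundary window, so the lookup is within the precomputed range. Replace $b'$ by the left endpoint of this cover. Handle the right side symmetrically via $\lpc_q(X_r, c' - |X_\ell|)$, where $c' = e' - q + 1$ is the starting position of the rightmost in-$t_i$ occurrence, whenever that occurrence lies entirely in $X_r$. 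Output the resulting $(b, e) = \loc_q(X_i, j)$.

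The main obstacle is verifying that the splice between the in-$t_i$ chain and the precomputed extension is seamless: I must show the anchor occurrence is shared and the chain-overlap condition is preserved across the seam. This follows from the definition of $\lsc_q(X_\ell, k)$: its underlying string is restricted to $X_\ell[1:|X_\ell|-k+1]$, forcing the rightmost occurrence in its cover to end at $|X_\ell| - k + 1 = b' + q - 1$ — i.e.\ to be the occurrence at $b'$ itself — so the two pieces share this anchor and compose without gap. The analogous argument applies on the right with $\lpc_q$, and the cases where the leftmost (resp.\ rightmost) in-$t_i$ occurrence already crosses or lies past the boundary are immediate since no further extension is possible in that direction.
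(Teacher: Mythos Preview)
Your approach is correct and essentially mirrors the paper's proof: both compute the overlapping chain inside the $O(q)$-length window $t_i = \suffix(X_\ell,2(q-1))\prefix(X_r,2(q-1))$ via KMP and then extend it leftward and rightward using the precomputed tables $\lsc_q(X_\ell,\cdot)$ and $\lpc_q(X_r,\cdot)$ from Lemmas~\ref{lem:lpc} and~\ref{lem:lsc}. The only cosmetic difference is the anchor for the extensions --- the paper uses the rightmost occurrence entirely in $X_\ell$ (resp.\ leftmost entirely in $X_r$) and then checks overlap with the in-window cover, while you anchor directly at the in-window chain endpoints $b'$ and $c'$; both choices yield the same $(b,e)$ and the same $O(q^2n)$ bound.
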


\begin{proof}
  Let $s_j = X_i[j:j+q-1]$.
  Firstly, we compute $(b_i, e_i) =
  \loc_q(X_i[|X_\ell|-2(q-1)+1:\min\{|X_i|,|X_\ell|+2(q-1)\}], j)$
  and then $\loc_q(X_i,j)$ can be computed based on $(b_i,e_i)$, as follows:
  Let $(\eb_\ell, \ee_\ell) = \lsc_q(X_\ell, |X_\ell|-\ee_\ell+1)$ and
  $(\bb_r, \be_r) = |X_\ell| \oplus \lpc_q(X_r, \bb_r-|X_\ell|)$,
  where
  $\ee_\ell = \max Occ(X_i[\max\{1, |X_\ell|-2(q-1)+1\}:|X_\ell|], s_j)$
  and $\bb_r = \min Occ(X_i[|X_\ell|+1:\min\{|X_i|,|X_\ell|+2(q-1)\}], s_j)$.

  \begin{enumerate}
  \item 
  If $b_i \leq |X_\ell|$ and $e_i > |X_\ell|$, 
  then we have $b \leq b_i \leq |X_\ell| < e_i \leq e$.
  $(b, e) = \loc_q(X_i, j)$ can be computed by checking  
  whether $(\eb_\ell, \ee_\ell)$, $(b_i, e_i)$, and $(\bb_r, \be_r)$ are overlapping or not.
  (See also Fig.~\ref{fig:loc_kusi} in Appendix.)

  \item 
  If $e_i \leq |X_\ell|$, then trivially $b = \eb_\ell$ and $e = e_i$.

  \item   
  If $b_i > |X_\ell|$, then trivially $b = b_i$ and $e = \be_r$.
  \end{enumerate}
  Each $\ee_\ell = h$ and $\bb_r = |X_\ell| + k$ can be computed 
  using the KMP algorithm on string $\suffix(X_\ell, 2(q-1))\prefix(X_r, 2(q-1))$ in $O(q)$ time.
  By Lemmas~\ref{lem:lpc} and~\ref{lem:lsc}, 
  $(\eb_\ell, \ee_\ell)$ and $(\bb_r, \be_r)$ can be pre-computed 
  in a total of $O(q^2n)$ time for all $1 \leq i \leq n$.
  Hence the lemma holds.
  \qed
\end{proof}

\subsection{Largest Left-Priority and Smallest Right-Priority Occurrences}

In order to compute $\NOcc(X_i[b:e], s)$
for all $X_i$ and all $j$ required for Theorem~\ref{theo:q-gram},
where $(b, e) = \loc_q(X_i, j)$ and $s = X_i[j:j+q-1]$,
we will use the largest and second largest occurrences of $\LnOcc$ and 
the smallest and second smallest occurrences of $\RnOcc$.

For any set $S$ of integers and integer $1 \leq k \leq |S|$, let 
$\max_k S$ and $min_k S$ denote the $k$-th largest and the $k$-th smallest
element of $S$.

For $1 \leq i \leq n$ and $1 \leq j \leq 2(q-1)$,
consider to compute $\max_k \LnOcc(X_i[j:\be_i], \bstr_{j})$ for $k = 1,2$,
where $(j, \be_i) = \lpc_q(X_i, j)$ and $\bstr_{j} = X_i[j:j+q-1]$.
Intuitively, difficulties in computing $\max_k \LnOcc(X_i[j:\be_i], \bstr_{j})$
come from the fact that the string $\derive(X_i)[j:\be_i]$ can be as long as $O(2^n)$,
but we only have prefix $\prefix(X_i,3(q-1))$ and suffix $\suffix(X_i,3(q-1))$
of $\derive(X_i)$ of length $O(q)$.
Hence we cannot compute the value of $\be_i$ by simply running the KMP algorithm
on those partial strings.
For the same reason, the size of $\LnOcc(X_i[j:\be_i], \bstr_{j})$ can be as large as $O(2^n/q)$.
Hence we cannot store $\LnOcc(X_i[j:\be_i], \bstr_{j})$ as is.
Still, as will be seen in the following lemma,
we can compute those values efficiently, only in $O(q^2n)$ time.

\begin{lemma} \label{lem:max_lnocc_lpc}
 For all variable $X_i=X_\ell X_r$ and $1 \leq j \leq 2(q-1)$,
 let $(j, \be_i) = \lpc_q(X_i, j)$, $\bstr_{j} = X_i[j:j+q-1]$.
 
 We can compute the values
 $\max_1 \LnOcc(X_i[j:\be_i], \bstr_{j})$ and
 $\max_2 \LnOcc(X_i[j:\be_i], \bstr_{j})$

 for all $1 \leq i \leq n$ and $1 \leq j \leq 2(q-1)$,
 in a total of $O(q^2n)$ time.  
\end{lemma}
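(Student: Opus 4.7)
The plan is to proceed by dynamic programming over the variables $X_1,\dots,X_n$, storing at each $X_i$ and each $j\in[1,2(q-1)]$ the ordered pair consisting of the two largest elements of $\LnOcc(X_i[j:\be_i],\bstr_{j})$ (where $\be_i$ is the second coordinate of $\lpc_q(X_i,j)$), and spending $O(q)$ time per pair for a total of $O(q^2n)$. The base case of terminal variables is immediate, so fix $X_i=X_\ell X_r$ under the inductive hypothesis for $X_\ell$ and $X_r$.

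Following the case distinction of Lemma~\ref{lem:lpc}, there are three situations. In the easy case $\lpc_q(X_i,j)=\lpc_q(X_\ell,j)$ the whole cover lies inside $X_\ell$, so the two desired values are inherited directly from the stored pair at $(X_\ell,j)$. Otherwise the cover extends past the boundary, and the left-priority greedy scan on $X_i[j:\be_i]$ decomposes naturally into three consecutive phases: its restriction to $X_\ell[j:\be_\ell]$, whose last two picks are the stored values $m_1^\ell,m_2^\ell$ at $(X_\ell,j)$; zero or more bridging picks inside the width-$2(q-1)$ boundary window $\suffix(X_\ell,2(q-1))\prefix(X_r,2(q-1))$; and a tail of picks inside $X_r$.

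The key structural observation for the recursion is that this tail, when non-empty, begins at some position $k'$ in $X_r$-coordinates with $k'\leq 2(q-1)$: the previous greedy pick must either be $m_1^\ell$ or lie in the boundary window (otherwise the scan would not have crossed it), and the tail's entry point into $X_r$ is then within $q$ positions to the right of that pick. Hence the tail's last two picks coincide with the stored pair at $(X_r,k')$, offset by $|X_\ell|$. To combine, I would run KMP for $\bstr_{j}$ on the $O(q)$-length window around the boundary to enumerate the bridging picks and to determine $k'$ in $O(q)$ time, then read off the desired pair for $(X_i,j)$ as the rightmost two values among the $O(1)$ candidates supplied by the three phases.

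The main obstacle will be organising the case analysis cleanly --- in particular, identifying $k'$ correctly from $m_1^\ell$ and the bridging occurrences, and handling the subcase in which the tail contributes fewer than two picks so that the second-largest of the combined scan is correctly recovered from $X_\ell$ or the boundary. This is precisely why both quantities must be maintained in the recursion: a recursion carrying $\max_1$ alone would not suffice, because the second-largest of the combined sequence can come from any of the three phases. Since each of the $O(qn)$ pairs $(X_i,j)$ costs $O(q)$ time, the total is $O(q^2n)$, as claimed.
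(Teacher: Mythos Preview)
Your approach is the same as the paper's: a bottom-up dynamic program that, for each $(X_i,j)$, inherits the left-priority scan from $X_\ell$, continues it across the boundary via KMP on a short window to locate the (at most one) bridging pick and the entry point $k'$ into $X_r$, and then reads the tail's last two picks from the stored pair at $(X_r,k')$.

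One quantitative slip needs fixing. Consecutive left-priority picks inside an overlapping cover can be as far as $2(q-1)$ apart, not $q$: after a pick at $p$, the next pick is the first occurrence $\ge p+q$, and since neighbouring occurrences in the cover are at most $q-1$ apart, that first occurrence can sit as far right as $p+2(q-1)$. In particular, when the previous pick is a bridging occurrence at some position $\le |X_\ell|$, the entry point satisfies only $k'\le 2(q-1)$, not $k'\le q$. Detecting a $q$-gram that starts at position $2(q-1)$ of $X_r$ requires seeing $X_r[1:3(q-1)]$, so the window $\suffix(X_\ell,2(q-1))\prefix(X_r,2(q-1))$ you propose is too short; the paper precomputes $\prefix(X_i,3(q-1))$ and $\suffix(X_i,3(q-1))$ for precisely this reason. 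With that enlargement the rest of your argument goes through unchanged and matches the paper's recursion.
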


\begin{proof}
See Appendix.
\ignore{
We compute the smallest occurrence $b_i$ in $\LnOcc(X_i[j:\be_i], \bstr_{j})$
that crosses $X_\ell$ and $X_r$, and does not overlap with 
the largest occurrence in $\LnOcc(X_\ell[j:\be_\ell], \bstr_{j})$,
where $(j, \be_\ell) = \lpc_q(X_\ell, j)$.
Also, we compute the smallest occurrence $\bb_r$ in $\LnOcc(X_i[j:\be_i], \bstr_{j})$
that is completely within $X_r$ and does not overlap with $b_i$.

Then the desired value $\max_1 \LnOcc(X_i[j:\be_i], \bstr_{j})$ can be computed
depending whether $b_i$ and $\bb_r$ exist or not.

Formally, let
$b_i = \max(\LnOcc(X_i[j:\be_i], \bstr_j) \cap [\max(|X_\ell|-q+1, \max\LnOcc(X_\ell[j:\be_\ell], \bstr_j))+q:|X_\ell|])$, and 
$\bb_r = \min \{k - |X_\ell| \mid k \in \LnOcc(X_i[j:\be_i], \bstr_j) \cap [|X_\ell|+1:|X_\ell|+q-1], 
\mbox{if } \exists b_i \mbox{ then } k \geq b_i + q\}$.

Then we have   
\begin{eqnarray*}
&&\textstyle{\max_1} \LnOcc(X_i[j:\be_i], \bstr_{j}) \\
&&=  \begin{cases}
    \max_1 \LnOcc(X_\ell[j:\be_\ell],\bstr_j)
    & \mbox{if } \not \exists b_i \mbox{ and } \not \exists \bb_r \\
    
    b_i-j+1 & \mbox{if } \exists b_i \mbox{ and } \not \exists \bb_r \\
    \bb_r-j + \max_1 \LnOcc(X_r[\bb_r:\be_r], \bstr_j) & \mbox{if } \exists \bb_r
  \end{cases}
\end{eqnarray*}
(See also Fig.~\ref{fig:maxLnOcc} in Appendix.)

  For all variables $X_i$ we pre-compute $\prefix(X_i, 3(q-1))$ and $\suffix(X_i, 3(q-1))$.
  This can be done in a total of $O(qn)$ time.

  Then, each $b_i$ and $\bb_r$
  can be computed in $O(q)$ time
  using the KMP algorithm, Lemma~\ref{lem:lpc},
  and the above recursion, giving a total of $O(q^2n)$
  time for all $1 \leq i \leq n$ and $1 \leq j \leq 2(q-1)$.
  
  It is not difficult to see that similar claims,
  with slightly different conditions, can be made for
  $\max_2\LnOcc(X_i[j:\be_i],\bstr_j)$ where the value corresponds to one of 4 values:
  $\max_2\LnOcc(X_\ell[j:\be_\ell],\bstr_j)$,
  $\max_1\LnOcc(X_\ell[j:\be_\ell],\bstr_j)$,
  $b_i$,
  or $\max_2\LnOcc(X_r[\bb_r:\be_r],\bstr_j)$,
  with appropriate offsets.  \qed
}

\end{proof}

The next lemma can be shown similarly to Lemma~\ref{lem:max_lnocc_lpc}.
\begin{lemma} \label{lem:min_rnocc_eloc}
 For all variable $X_i=X_\ell X_r$ and $1 \leq j \leq 2(q-1)$,
 let $(\eb, \ee) = \lsc_q(X_i, j)$, and $\estr_{j} = X_i[|X_i|-j-q+2:|X_i|-j+1]$.
 We can compute the values
 $\min_1 \RnOcc(X_i[\eb:\ee], \estr_{j})$ and
 $\min_2 \RnOcc(X_i[\eb:\ee], \estr_{j})$
 for all $1 \leq i \leq n$ and $1 \leq j \leq 2(q-1)$,
 in a total of $O(q^2n)$ time.
\end{lemma}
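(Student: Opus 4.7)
The plan is to mirror the proof of Lemma~\ref{lem:max_lnocc_lpc}: the pair $(\lsc_q, \RnOcc)$ is the reversal of $(\lpc_q, \LnOcc)$ via the identity $\LnOcc(T, P) = |T|-|P|+2 \ominus \RnOcc(\rev{T}, \rev{P})$ noted in the preliminaries, so the same dynamic-programming recursion applies with the roles of ``left'' and ``right'' swapped throughout. Conceptually one could just build an SLP for $\rev{T}$ by swapping the children in every production and invoke Lemma~\ref{lem:max_lnocc_lpc} directly; I prefer to repeat the construction on $\mathcal{T}$ itself to avoid the bookkeeping of translating indices back.

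For $X_i = X_\ell X_r$, I would assume inductively that for all $1 \leq k \leq 2(q-1)$ the quantities $\lsc_q(X_\ell, k)$, $\lsc_q(X_r, k)$, and the corresponding $\min_1, \min_2 \RnOcc(\cdot, \cdot)$ values have already been computed. Fix $j$, let $(\eb, \ee) = \lsc_q(X_i, j)$ and $\estr_j = X_i[|X_i|-j-q+2:|X_i|-j+1]$. The cover $[\eb:\ee]$ ends at $\ee = |X_i|-j+1$ and extends leftward. If $\ee \leq |X_\ell|$, the whole cover lies in $X_\ell$ and the answer is inherited directly from the $X_\ell$-child; if $\eb > |X_\ell|$, it lies in $X_r$ and the answer is inherited with offset $|X_\ell|$. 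The interesting case is when the cover crosses the boundary.

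In the crossing case, following the skeleton of Lemma~\ref{lem:max_lnocc_lpc}, I would compute two auxiliary occurrences: the rightmost right-priority occurrence $e_i$ of $\estr_j$ in $X_i$ that crosses $X_\ell$ and $X_r$ and does not overlap with $|X_\ell| + \min_1 \RnOcc(X_r[\eb_r:\ee_r], \estr_j)$ (the leftmost occurrence already taken on the $X_r$ side), and the rightmost occurrence $\bb_\ell$ of $\estr_j$ lying entirely in $X_\ell$ that does not overlap with $e_i$. Both are located by running the KMP algorithm on $\suffix(X_\ell, 3(q-1))\prefix(X_r, 3(q-1))$ in $O(q)$ time. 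Then $\min_1 \RnOcc(X_i[\eb:\ee], \estr_j)$ is determined by a three-way case analysis on whether $e_i$ and $\bb_\ell$ exist, symmetric to the formula for $\max_1 \LnOcc$ in Lemma~\ref{lem:max_lnocc_lpc}; and $\min_2$ is obtained by picking the best among four candidates with the appropriate offsets: the two smallest right-priority occurrences stored for $X_\ell$ (restricted to $[\eb:\ee]\cap X_\ell$), together with $e_i$ and the second-smallest surviving occurrence from the $X_r$-child.

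Summed over the $O(q)$ relevant values of $j$ per variable and the $n$ variables, this yields $O(q^2n)$ time in total. The main obstacle, as in the prefix case, is the bookkeeping needed to keep $\min_2$ correct when $e_i$ or $\bb_\ell$ coincides with an occurrence already recorded as $\min_1$ of a child, or when incorporating $e_i$ changes which occurrences in the child survive the non-overlap constraint; the overlap conditions that I impose on $e_i$ and $\bb_\ell$ are chosen precisely to resolve this, exactly as in Lemma~\ref{lem:max_lnocc_lpc}.
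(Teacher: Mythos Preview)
Your proposal is correct and follows exactly the approach the paper intends: the paper simply states that this lemma ``can be shown similarly to Lemma~\ref{lem:max_lnocc_lpc}'', and your write-up spells out precisely that left--right symmetric argument (with the reversal identity $\LnOcc(T,P) = |T|-|P|+2 \ominus \RnOcc(\rev{T},\rev{P})$ justifying it). There is nothing more in the paper's own proof.
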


\begin{lemma} \label{lem:max_lnocc_lsc}
  For all variable $X_i=X_\ell X_r$ and $1 \leq j < q$,
  $\max \LnOcc(X_i[\eb_i:\ee_i], \estr_{j})$ can be computed in a total of $O(q^2n)$ time,
  where $(\eb_i, \ee_i) = \lsc_q(X_i, j)$ and $\estr_{j} = X_i[|X_i|-j-q+2:|X_i|-j+1]$.
\end{lemma}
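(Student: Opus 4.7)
The plan is to mirror the dynamic programming of Lemma~\ref{lem:max_lnocc_lpc}, with $\lsc$ playing the role of $\lpc$ and the recursion reading the $X_\ell/X_r$ boundary from the opposite side. We process the SLP variables in topological order. For each $X_i = X_\ell X_r$ and $1 \leq j < q$, let $(\eb_i, \ee_i) = \lsc_q(X_i, j)$ (available from Lemma~\ref{lem:lsc}) and $\estr_j = X_i[|X_i|-j-q+2:|X_i|-j+1]$. Precompute $\prefix(X_i, 3(q-1))$ and $\suffix(X_i, 3(q-1))$ for every $i$ in $O(qn)$ total time, so that a single KMP scan of the $O(q)$-length boundary string $\suffix(X_\ell, 3(q-1))\prefix(X_r, 3(q-1))$ enumerates all occurrences of $\estr_j$ near the split in $O(q)$ time.

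The recursion branches on whether the cover lies entirely inside $X_r$ or crosses the boundary. If $\eb_i > |X_\ell|$, the answer is looked up directly from the tabulated $\lsc_q(X_r, j)$-value, offset by $|X_\ell|$. Otherwise, I would simulate the left-to-right $\LnOcc$ greedy on the cover in three stages: (a) obtain the last pick $v_\ell$ of the $X_\ell$ sub-chain from the tabulated $\max \LnOcc$ for $\lsc_q(X_\ell, |X_\ell|-p^\star-q+2)$, where $p^\star$ is the rightmost chain occurrence of $\estr_j$ strictly inside $X_\ell$ (visible to the KMP scan since $p^\star \geq |X_\ell|-2(q-1)+1$); (b) use the same KMP scan to locate $v_m$, the smallest chain occurrence with $v_m \geq v_\ell + q$; (c) finish the greedy inside $X_r$ by combining $v_m$ with tabulated information for $X_r$. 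The structural identity needed---that the $X_\ell$ (resp.\ $X_r$) sub-chain of $\lsc_q(X_i, j)$ coincides with the chain of $\lsc_q(X_\ell, \cdot)$ (resp.\ $\lsc_q(X_r, j)$)---follows since any overlap chain of $\estr_j$ inside a child is also an overlap chain inside $X_i$, and vice versa.

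Each pair $(X_i, j)$ is processed in $O(q)$ time, and with $O(q)$ values of $j$ per variable the grand total is $O(q^2 n)$. The main obstacle is stage (c): the greedy continuation inside $X_r$ starts at $v_m - |X_\ell|$, which in general is strictly to the right of the leftmost element of $\lsc_q(X_r, j)$'s chain, so the raw tabulated $\max \LnOcc$ for $\lsc_q(X_r, j)$---which greedily starts from that leftmost element---is not directly the answer. I would close the gap by noting that $v_m - |X_\ell| \in [1, q-1]$ (since the chain has gaps of size less than $q$, so $v_m$ is at most $q-1$ past the boundary) and augmenting the DP with $O(q)$ additional tabulated entries per $(X, j)$ recording $\max \LnOcc$ of the $\lsc_q(X, j)$-cover conditioned on the starting offset within the $q-1$ leftmost chain positions; the extra factor is absorbed into $O(q^2 n)$. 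Degenerate sub-cases---no fully $X_\ell$-contained chain occurrence, no continuation into $X_r$, or the greedy terminating at $v_m$ itself---are handled by taking the maximum of $\eb_i$, $v_\ell$, $v_m$, and the $X_r$-continuation value, whichever are defined.
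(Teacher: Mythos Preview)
Your proposal is workable in spirit but takes a genuinely different route from the paper. The paper does \emph{not} build a standalone $\lsc$-recursion with an $O(q)$-offset augmented table. Instead it leans entirely on Lemma~\ref{lem:max_lnocc_lpc}: after locating (via KMP on the $O(q)$-length boundary string) the last greedy pick $m$ inside $X_\ell$ and the first pick $\bb_r$ that lands strictly inside $X_r$, it looks up the \emph{prefix} cover $(\bb_{r'},\be_{r'})=\lpc_q(X_r,\bb_r-|X_\ell|)$ and reads off both $\max_1$ and $\max_2$ of $\LnOcc$ on that cover (already tabulated by Lemma~\ref{lem:max_lnocc_lpc}). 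Because $\ee_i=|X_i|-j+1$ with $j<q$, the $\lpc$-cover in $X_r$ can overshoot $\ee_i$ by at most $q-1$ positions, so the greedy on $[\bb_{r'},\be_{r'}]$ has at most one pick past the cutoff; hence the answer is $h=\max_1$ if $h\le\ee_i-q+1$ and $h'=\max_2$ otherwise. This two-value trick replaces your whole offset-augmentation idea.

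Your augmentation can be made to work, but note two wrinkles you glossed over: (i) when $v_m$ itself crosses the boundary, the next greedy pick can sit as far right as $|X_\ell|+2(q-1)$, so the offset into the $X_r$ chain ranges up to $2(q-1)$, not $q-1$; and (ii) to stay within $O(q^2n)$ you must share a single boundary KMP scan across all offsets for the same $(X_i,j)$ and do the per-offset lookups in $O(1)$. Both are fixable, but the paper's route via Lemma~\ref{lem:max_lnocc_lpc} avoids the extra bookkeeping entirely.
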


\begin{proof}
The lemma can be shown by using Lemma~\ref{lem:max_lnocc_lpc}.  
See Appendix for details.
\end{proof}

\begin{lemma} \label{lem:min_rnocc_lpc}
  For all variable $X_i=X_\ell X_r$ and $1 \leq j < q$,
  $\min \RnOcc(X_i[\bb_i:\be_i], \bstr_{j})$ can be computed in a total of $O(q^2n)$ time,
  where $(\bb_i, \be_i) = \lpc_q(X_i, j)$ and $\bstr_{j} = X_i[j:j+q-1]$.
\end{lemma}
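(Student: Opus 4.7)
Plan: This lemma is the right-priority mirror of Lemma~\ref{lem:max_lnocc_lsc}, and I would establish it analogously, using Lemma~\ref{lem:min_rnocc_eloc} in the role that Lemma~\ref{lem:max_lnocc_lpc} plays there. The computation proceeds by bottom-up dynamic programming on the variables of $\mathcal{T}$, with the values for $X_\ell$ and $X_r$ assumed already stored when processing $X_i = X_\ell X_r$.

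For fixed $X_i = X_\ell X_r$ and $1 \leq j < q$, decompose $\lpc_q(X_i, j) = (j, \be_i)$ along the $X_\ell/X_r$ boundary exactly as in Lemma~\ref{lem:lpc}: its $X_r$-portion (if any) corresponds to some $\lpc_q(X_r, k) = (k, \be'_r)$, while its $X_\ell$-portion together with the crossing region corresponds to an $\lsc$-style sub-cover in $X_\ell$, i.e.\ $\lsc_q(X_\ell, j')$ for the appropriate reflected offset $j'$. Correspondingly, the right-priority chain on $X_i[j:\be_i]$ splits into (a) a chain strictly inside $X_r$, whose leftmost element is $|X_\ell| + \min \RnOcc(X_r[k:\be'_r], \bstr_j)$ by the inductive hypothesis of this lemma; (b) at most one hop through the crossing region, located via a single KMP scan on $\suffix(X_\ell, 2(q-1)) \prefix(X_r, 2(q-1))$ in $O(q)$ time; and (c) a continuation strictly inside $X_\ell$, whose leftmost element is $\min_1 \RnOcc(X_\ell[\eb_\ell:\ee_\ell], \estr_{j'})$ supplied by Lemma~\ref{lem:min_rnocc_eloc}, with $\min_2$ substituted in the boundary case where the hop in (b) consumes the $\min_1$ position. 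Combining (a)--(c) yields $\min \RnOcc(X_i[j:\be_i], \bstr_j)$ in $O(q)$ time per $(i,j)$, and iterating over $1 \leq i \leq n$ and $1 \leq j < q$ gives the claimed $O(q^2n)$ total.

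The main obstacle is the case analysis at the $X_\ell/X_r$ boundary: the leftmost right-priority pick of the chain on $X_i[j:\be_i]$ may land inside $X_r$, in the crossing region, or deeper inside $X_\ell$, and inside $X_\ell$ it may or may not coincide with $\min_1 \RnOcc(X_\ell[\eb_\ell:\ee_\ell], \estr_{j'})$; this is precisely why Lemma~\ref{lem:min_rnocc_eloc} was stated for both $\min_1$ and $\min_2$, mirroring the role of $\max_1, \max_2$ in Lemma~\ref{lem:max_lnocc_lpc}. A cleaner alternative that avoids the case split is to construct the reverse SLP $\mathcal{T}^R$ (setting $X_i^R = X_r^R X_\ell^R$ whenever $X_i = X_\ell X_r$) in $O(n)$ time and apply Lemma~\ref{lem:max_lnocc_lsc} to $\mathcal{T}^R$: the identity $\RnOcc(T, P) = |T|-|P|+2 \ominus \LnOcc(\rev{T}, \rev{P})$, together with the positional correspondence $\lpc_q(X_i, j) = (j, \be_i) \leftrightarrow \lsc_q(X_i^R, j) = (|X_i|-\be_i+1, |X_i|-j+1)$ and $\bstr_j \leftrightarrow \rev{\bstr_j} = \estr_j$, allows one to read off $\min \RnOcc(X_i[j:\be_i], \bstr_j)$ directly from the output of Lemma~\ref{lem:max_lnocc_lsc} on $\mathcal{T}^R$, again within the $O(q^2n)$ budget.
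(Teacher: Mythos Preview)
Your proposal is correct and follows essentially the same approach as the paper: the paper's proof simply states that the lemma ``can be shown in a similar way to Lemma~\ref{lem:max_lnocc_lsc}, using Lemma~\ref{lem:min_rnocc_eloc} instead of Lemma~\ref{lem:max_lnocc_lpc},'' and your detailed decomposition (recursive use of the present lemma on the $\lpc$ side in $X_r$, a constant-size crossing region handled by KMP, and Lemma~\ref{lem:min_rnocc_eloc} supplying $\min_1/\min_2$ on the $\lsc$ side in $X_\ell$) is exactly the mirror of the proof of Lemma~\ref{lem:max_lnocc_lsc} in the appendix. Your alternative via the reversed SLP $\mathcal{T}^R$ is also valid and is a clean way to avoid repeating the case analysis, though the paper does not take that route.
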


\begin{proof}
The lemma can be shown in a similar way to
Lemma~\ref{lem:max_lnocc_lsc},
using Lemma~\ref{lem:min_rnocc_eloc} instead of Lemma~\ref{lem:max_lnocc_lpc}.
\qed
\end{proof}

\subsection{Counting Non-Overlapping Occurrences in Longest Overlapping Covers}

Firstly, we show how to count
non-overlapping occurrences of $q$-gram $p_j$ in $X_i[j:\be_i]$,
for all $i$ and $j$,
where $p_j = X_i[j:j+q-1]$ and $(j, \be_i) = \lpc_q(X_i,j)$.

\begin{lemma} \label{lem:nocc_lpc}
 For all variable $X_i=X_\ell X_r$ and $1 \leq j \leq 2(q-1)$,
 let $(j, \be_i) = \lpc_q(X_i, j)$ and $\bstr_{j} = X_i[j:j+q-1]$.
 We can compute $\NOcc(X_i[j:\be_i], \bstr_{j})$
 for all $1 \leq i \leq n$ and $1 \leq j \leq 2(q-1)$,
 in a total of $O(q^2n)$ time.
\end{lemma}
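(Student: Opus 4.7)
The plan is a bottom-up DP on the SLP that, at each variable $X_i = X_\ell X_r$ and each $1 \leq j \leq 2(q-1)$, combines the recursive $\NOcc$ values from the two children through a small $O(q)$-sized boundary computation. By Lemma~\ref{lem:left_right_max_nonoverlap}, $\NOcc = |\LnOcc|$, so we are really counting greedy left-priority non-overlapping picks. Two properties keep the decomposition clean: (a) the isolation of $\lpc_q$-covers — no occurrence of $\bstr_j$ lies in $X_\ell[\be_\ell - q + 2 : \be_\ell]$, since otherwise $\loc_q$ would fail maximality; and (b) the fact that whenever the $X_i$-chain enters $X_r$, its first $X_r$-occurrence must lie inside $\prefix(X_r, 2(q-1))$. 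Together these confine every ``new'' $\LnOcc$ pick (relative to the left sub-cover) to a window of size $O(q)$ around the boundary between $X_\ell$ and $X_r$.

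In Case 1 of Lemma~\ref{lem:lpc} ($\be_i = \be_\ell$) I inherit $\NOcc(X_\ell[j:\be_\ell], \bstr_j)$ unchanged. In Cases 2 and 3, I KMP-scan the precomputed boundary string $\suffix(X_\ell, 2(q-1))\prefix(X_r, 2(q-1))$ to compute (i) $b_i$, the smallest crossing occurrence of $\bstr_j$ at $X_i$-position $\geq m_\ell + q$, where $m_\ell = \max_1\LnOcc(X_\ell[j:\be_\ell], \bstr_j)$ is supplied by Lemma~\ref{lem:max_lnocc_lpc}; and (ii) $\bb_r$, the smallest $X_r$-position (in $X_r$-coordinates) of an occurrence of $\bstr_j$ whose $X_i$-position is $\geq b_i + q$ (or $\geq m_\ell + q$ if $b_i$ does not exist). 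These are precisely the quantities that appear in the recursion of Lemma~\ref{lem:max_lnocc_lpc}. Writing $[\varphi]$ for the indicator of a predicate $\varphi$, the count then satisfies
\[ \NOcc(X_i[j:\be_i], \bstr_j) \;=\; \NOcc(X_\ell[j:\be_\ell], \bstr_j) + [\exists b_i] + [\exists \bb_r]\cdot\NOcc(X_r[\bb_r:\beta], \bstr_j), \]
where $\lpc_q(X_r, k) = (k, \beta)$ for $k = \min\Occ(\prefix(X_r, 2(q-1)), \bstr_j)$, and $\bb_r \in [1, 2(q-1)]$. Crucially, $\lpc_q(X_r, \bb_r)$ shares the same right end $\beta$ as $\lpc_q(X_r, k)$ because $\bb_r$ belongs to the same $X_r$-chain as $k$, so $\NOcc(X_r[\bb_r:\beta], \bstr_j)$ is another value of the same recursion that has already been computed at $(X_r, \bb_r)$.

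The main obstacle is verifying this three-term decomposition. The summand $[\exists b_i]$ captures the at-most-one crossing pick that greedy $\LnOcc$ may insert between the last left-cover pick and the first in-$X_r$ pick (isolation rules out more than one such crossing fitting into the $O(q)$-wide window), while $[\exists \bb_r]\cdot\NOcc(X_r[\bb_r:\beta], \bstr_j)$ accounts for all subsequent picks inside $X_r$, since greedy left-priority selection starting at $X_r$-position $\bb_r$ within the $X_r$-chain produces exactly $\LnOcc(X_r[\bb_r:\beta], \bstr_j)$. Each $(X_i, j)$ is then processed in $O(q)$ time via a single KMP scan of a length-$O(q)$ string plus $O(1)$ arithmetic on the precomputed values, giving the claimed $O(q^2 n)$ total over the $n$ variables and the $O(q)$ values of $j$.
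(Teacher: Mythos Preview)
Your approach is essentially the same as the paper's: both reduce $\NOcc$ to $|\LnOcc|$ via Lemma~\ref{lem:left_right_max_nonoverlap}, split the greedy picks into those in $X_\ell$, at most one crossing pick $b_i$, and those in $X_r$ starting at $\bb_r$, and compute $b_i,\bb_r$ by a KMP scan of an $O(q)$-length boundary window, yielding the identical three-term recursion
$\NOcc(X_\ell[j:\be_\ell],\bstr_j)+[\exists b_i]+[\exists \bb_r]\cdot\NOcc(X_r[\bb_r:\be_r],\bstr_j)$.
One small technical slip: your window $\suffix(X_\ell,2(q-1))\prefix(X_r,2(q-1))$ only exposes occurrences \emph{starting} at $X_r$-positions $\leq q-1$, yet (as your own bound $\bb_r\in[1,2(q-1)]$ shows) the first in-$X_r$ pick after $b_i$ can start as late as $X_r$-position $2(q-1)$; to see that full occurrence you need $\prefix(X_r,3(q-1))$, which is exactly why the paper precomputes prefixes and suffixes of length $3(q-1)$ here. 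This is a constant-factor fix and does not affect the $O(q^2n)$ bound or the argument.
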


\begin{proof}
By Lemma~\ref{lem:left_right_max_nonoverlap},
we have $\NOcc(X_i[j:\be_i], \bstr_{j}) = |\LnOcc(X_i[j:\be_i], \bstr_{j})|$.
We compute the occurrence $b_i$ in $(j-1) \oplus \LnOcc(X_i[j:\be_i], \bstr_{j})$
that crosses $X_\ell$ and $X_r$, if such exists.
Note that at most one such occurrence exists.
Also, we compute the smallest occurrence $\bb_r$ in $(j-1) \oplus \LnOcc(X_i[j:\be_i], \bstr_{j})$
that is completely within $X_r$.
Then the desired value $\NOcc(X_i[j:\be_i], \bstr_{j})$ can be computed
depending whether $b_i$ and $\bb_r$ exist or not.

Formally:
Consider the set 
$S = ((j-1) \oplus \LnOcc(X_i[j:\be_i], \bstr_j)) \cap [|X_\ell|-q+2:|X_\ell|]$
of occurrence of $\bstr_j$ which is either empty or singleton.
If $S$ is singleton, then let $b_i$ be its single element.
Let $\bb_r = \min \{k \mid k \in ((j-1) \oplus \LnOcc(X_i[j:\be_i], \bstr_j)) \cap [|X_\ell|+1:|X_\ell|+q-1],
\mbox{if } \exists b_i \mbox{ then } k \geq b_i + q\}$.

Then we have
\begin{eqnarray*}
\lefteqn{\NOcc(X_i[j:\be_i], \bstr_j)} \\
& = & 
 \begin{cases}
  \NOcc(X_r[j-|X_\ell|:\be_i-|X_\ell|], \bstr_j) & \mbox{if } j > |X_\ell|, \\
  \NOcc(X_\ell[j:\be_\ell], \bstr_j) & \mbox{if} \not\exists b_i \mbox{ and} \not\exists \bb_r,\\
  \NOcc(X_\ell[j:\be_\ell], p_j) + 1 & \mbox{if } \exists b_i \mbox{ and} \not\exists \bb_r \\
  \NOcc(X_\ell[j:\be_\ell], p_j) + \NOcc(X_r[b_r:\be_r], p_j) & \mbox{if} \not\exists b_i \mbox{ and } \exists \bb_r, \\
  \NOcc(X_\ell[j:\be_\ell], p_j) + \NOcc(X_r[b_r:\be_r], p_j) + 1 & \mbox{if } \exists b_i \mbox{ and } \exists \bb_r, \\
 \end{cases}
\end{eqnarray*}
where $(\bb_r, \be_r) = \lpc_q(X_r, \bb_r)$.

For all variables $X_i$ we pre-compute $\prefix(X_i, 3(q-1))$ and $\suffix(X_i, 3(q-1))$.
This can be done in a total of $O(qn)$ time.
If $b_i$ or $\bb_r$ exists,
$|X_\ell|-3(q-1) < j-1+ \max \LnOcc(X_\ell[j:\be_\ell], j) \leq |X_\ell|-q+2$.
Then, each $b_i$ and $\bb_r$ can be computed from
$\LnOcc(X_i[(j-1+ \max \LnOcc(X_\ell[j:\be_\ell], j)):|X_\ell|+3(q-1)], p_j)$
running the KMP algorithm on string $\suffix(X_\ell, 3(q-1))\prefix(X_r, 3(q-1))$.
Based on the above recursion, we can compute 
$\NOcc(X_i[j:\be_i], \bstr_j)$ in a total of $O(q^2n)$
time for all $1 \leq i \leq n$ and $1 \leq j \leq 2(q-1)$.
\qed
\end{proof}

The next lemma can be shown similarly to Lemma~\ref{lem:nocc_lpc}.
\begin{lemma} \label{lem:nocc_lsc}
 For all variable $X_i=X_\ell X_r$ and $1 \leq j \leq 2(q-1)$,
 let $(\eb_i, \ee_i) = \lsc_q(X_i, j)$ and $\estr_{j} = X_i[|X_i|-j-q+2:|X_i|-j+1]$.
 We can compute  $\NOcc(X_i[\eb_i:\ee_i], \estr_{j})$
 for all $1 \leq i \leq n$ and $1 \leq j \leq 2(q-1)$,
 in a total of $O(q^2n)$ time.
\end{lemma}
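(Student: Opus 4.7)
The plan is to mirror the argument of Lemma~\ref{lem:nocc_lpc}, working with $\RnOcc$ anchored at the right end of the cover rather than $\LnOcc$ at the left end. Since Lemma~\ref{lem:left_right_max_nonoverlap} gives $\NOcc(X_i[\eb_i:\ee_i],\estr_j) = |\RnOcc(X_i[\eb_i:\ee_i],\estr_j)|$ and $\lsc_q(X_i,j)$ pins the cover at its right endpoint $\ee_i = |X_i|-j+1$, a recursion built on the SLP split $X_i = X_\ell X_r$ and anchored on the right is the natural counterpart to the one used in Lemma~\ref{lem:nocc_lpc}.

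Let $X_i = X_\ell X_r$ and inductively assume that $\NOcc$ values on the relevant $\lsc_q$ covers of $X_\ell$ and $X_r$ have already been computed. I would split into three cases according to how $[\eb_i,\ee_i]$ sits relative to the $X_\ell/X_r$ boundary: the cover lies entirely in $X_\ell$ ($\ee_i \leq |X_\ell|$), entirely in $X_r$ ($\eb_i > |X_\ell|$), or straddles the boundary. The first two are direct lookups on the inductively computed values. In the straddling case, the key observation, symmetric to Lemma~\ref{lem:nocc_lpc}, is that at most one right-priority non-overlapping occurrence of $\estr_j$ in $[\eb_i,\ee_i]$ crosses $X_\ell$ and $X_r$; denote it $e_i$. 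Once $e_i$ is identified (or shown not to exist), the rightmost $X_\ell$-contained right-priority occurrence compatible with $e_i$ determines how the left sub-cover is truncated, and $\NOcc(X_i[\eb_i:\ee_i],\estr_j)$ is expressed as the sum of the pre-computed $\NOcc$ on the truncated left and right sub-covers, plus $1$ if $e_i$ exists.

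For the implementation, I would pre-compute $\prefix(X_i,3(q-1))$ and $\suffix(X_i,3(q-1))$ in $O(qn)$ total time, then for each pair $(i,j)$ detect the crossing occurrence $e_i$ and the relevant boundary entries of $\RnOcc$ by one KMP scan of $\suffix(X_\ell,3(q-1))\prefix(X_r,3(q-1))$ in $O(q)$ time. The smallest and second-smallest elements of $\RnOcc$ on the $X_r$-side sub-cover, needed to certify that the chosen crossing occurrence does not collide with that sub-cover (and symmetrically on the $X_\ell$ side), are delivered by Lemma~\ref{lem:min_rnocc_eloc}, playing here the role that Lemma~\ref{lem:max_lnocc_lpc} plays in the proof of Lemma~\ref{lem:nocc_lpc}. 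Summing $O(q)$ work over $O(qn)$ pairs $(i,j)$ gives the claimed $O(q^2n)$ bound.

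The main obstacle, as in Lemma~\ref{lem:nocc_lpc}, is verifying the boundary case analysis so as to avoid both double counting and undercounting. When $e_i$ is present, the right sub-cover's count must be taken so that its leftmost selected occurrence is at position at least $e_i + q$, and the left sub-cover must be truncated so that its rightmost selected occurrence is at position at most $e_i - q$. These conditions are exactly the $[|X_\ell|-q+2:|X_\ell|]$ window constraints of Lemma~\ref{lem:nocc_lpc} transposed to the opposite boundary, and the second-smallest values supplied by Lemma~\ref{lem:min_rnocc_eloc} are precisely what lets us check them in $O(1)$ per case.
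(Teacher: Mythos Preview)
Your proposal is correct and matches the paper's approach: the paper's proof is the one-line remark ``can be shown similarly to Lemma~\ref{lem:nocc_lpc},'' and you have correctly identified and fleshed out the symmetric argument using $\RnOcc$ and $\lsc_q$ in place of $\LnOcc$ and $\lpc_q$. One small over-elaboration: the direct analogue of the Lemma~\ref{lem:nocc_lpc} recursion needs only $\min_1\RnOcc$ on the $X_r$-side cover (to locate where to start the KMP scan across the boundary), not $\min_2$; the second-extremal values from Lemma~\ref{lem:min_rnocc_eloc} are used elsewhere (Lemma~\ref{lem:min_rnocc_lpc}) but are not required here.
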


We have also assumed in Theorem~\ref{theo:q-gram} that 
$\NOcc(X_i[b:e], s_j)$ are already computed.
This can be computed efficiently, as follows:
\begin{lemma} \label{lem:nocc_in_loc}
  For all variable $X_i=X_\ell X_r$ and $j$ s.t.  
  $\min\{1, |X_\ell|-2(q-1)+1\} \leq j \leq \min\{|X_i|-q+1,|X_\ell|+q-1\}$,
  $\NOcc(X_i[b:e], s_j)$  can be computed in a total of $O(q^2n)$ time,
  where $(b,e)=\loc_q(X_i, j)$ and $s_j=X_i[j:j+q-1]$.
\end{lemma}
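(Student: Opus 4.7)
The plan is to apply Lemma~\ref{lemma:nocc} to $T = X_i[b:e]$ with the split taken at the boundary between the $X_\ell$-part and the $X_r$-part of $T$, reducing the task to three ingredients: a left sub-count already delivered by Lemma~\ref{lem:nocc_lsc} applied to $X_\ell$, a right sub-count delivered by Lemma~\ref{lem:nocc_lpc} applied to $X_r$, and a sub-count on a short interval straddling the boundary. The two degenerate cases of Lemma~\ref{lem:loc_kusi} are dispatched first: if $e \leq |X_\ell|$ the cover lies entirely inside $X_\ell$ and the answer is exactly $\NOcc(X_\ell[\eb_\ell:\ee_\ell], s_j)$ from Lemma~\ref{lem:nocc_lsc}, while if $b > |X_\ell|$ it lies inside $X_r$ and equals $\NOcc(X_r[\bb_r - |X_\ell|:\be_r - |X_\ell|], s_j)$ from Lemma~\ref{lem:nocc_lpc}.

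In the remaining case $b \leq |X_\ell| < e$, I set $u_1 = \alpha + q - 1$, where $\alpha = \max\LnOcc(X_\ell[\eb_\ell:\ee_\ell], s_j)$ is supplied by Lemma~\ref{lem:max_lnocc_lsc} applied to $X_\ell$, and $u_2 = |X_\ell| + \beta$, where $\beta = \min\RnOcc(X_r[\bb_r - |X_\ell|:\be_r - |X_\ell|], s_j)$ is supplied by Lemma~\ref{lem:min_rnocc_lpc} applied to $X_r$; when an extremum is undefined the corresponding end-piece is empty and simply dropped. Since $\ee_\ell$ is the rightmost occurrence of $s_j$ ending in $X_\ell$, any hypothetical occurrence of $s_j$ in $X_\ell[\eb_\ell:|X_\ell|]$ with start strictly beyond $\ee_\ell$ would still lie inside the $2(q-1)$-boundary window and contradict the maximality of $\ee_\ell$. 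Hence $\Occ(X_\ell[\eb_\ell:|X_\ell|], s_j) = \Occ(X_\ell[\eb_\ell:\ee_\ell], s_j)$, so $|\LnOcc(X_i[b:u_1], s_j)| = \NOcc(X_\ell[\eb_\ell:\ee_\ell], s_j)$, which is precisely the pre-computed value from Lemma~\ref{lem:nocc_lsc}; a symmetric identity gives $|\RnOcc(X_i[u_2:e], s_j)|$ from Lemma~\ref{lem:nocc_lpc}. By Lemma~\ref{lemma:nocc} only $\NOcc(X_i[u_1+1:u_2-1], s_j)$ then remains to be computed.

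To obtain this middle sub-count in $O(q)$ time I would establish $u_2 - u_1 - 1 = O(q)$ via the chain-density property built into $\loc_q$. By the definition of left-priority packing we have $\Occ(X_\ell[b:|X_\ell|], s_j) \cap [\alpha + q : |X_\ell| - q + 1] = \emptyset$; yet by the defining property of $\loc_q$ every occurrence in $(b,e)$ has another within $q-1$ positions to its right until the chain ends at $e - q + 1 > |X_\ell|$. Consequently the chain must bridge this forbidden zone in a single jump of size at most $q-1$, forcing $|X_\ell| - \alpha \leq 3q - 4$ and therefore $u_1 \geq |X_\ell| - 2q + 3$; a symmetric argument on the $X_r$ side gives $u_2 \leq |X_\ell| + 2q - 2$. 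Thus the middle interval is contained in the pre-computed boundary string $\suffix(X_\ell, 2(q-1)) \prefix(X_r, 2(q-1))$ of length $4q-4$, and a single KMP run of $s_j$ against this $O(q)$-length substring, followed by a greedy left-priority pass over the resulting occurrences, yields $\NOcc(X_i[u_1+1:u_2-1], s_j)$ in $O(q)$ time.

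The principal obstacle is the chain-density bound on the middle interval length, which is the one place where the structural property of a longest overlapping cover is essential; the rest is coordinate bookkeeping and reindexing of already-computed quantities. Once this bound is in hand, each pair $(X_i, j)$ costs $O(q)$ time (an $O(1)$ combination of pre-computed values plus one $O(q)$ KMP scan), so summing over the $O(n)$ variables and the $O(q)$ relevant positions per variable gives the claimed $O(q^2 n)$ total, matching the budget of the preceding lemmas.
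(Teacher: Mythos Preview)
Your proposal is correct and follows essentially the same route as the paper: split $X_i[b:e]$ at the $X_\ell/X_r$ boundary via Lemma~\ref{lemma:nocc}, read off the two outer counts from the precomputed values of Lemmas~\ref{lem:nocc_lsc} and~\ref{lem:nocc_lpc}, locate $u_1,u_2$ via Lemmas~\ref{lem:max_lnocc_lsc} and~\ref{lem:min_rnocc_lpc}, and finish the middle piece with a KMP scan. Your explicit chain-density argument that $u_1 \geq |X_\ell|-2q+3$ and $u_2 \leq |X_\ell|+2q-2$ is a genuine addition, since the paper merely asserts the middle term is computable in $O(q)$ time without justifying the length bound; aside from some loose coordinate conventions (e.g.\ whether $\alpha$ is relative to $X_\ell[\eb_\ell:\ee_\ell]$ or absolute in $X_i$), the argument is sound.
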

\begin{proof}
 
We consider the case where $\max\{1, |X_\ell|-q+2\} \leq j \leq |X_\ell|$,
as the other cases can be shown similarly.
Our basic strategy for computing $\NOcc(X_i[b:e], s_j)$
is as follows.
Firstly we compute the largest element of $\LnOcc(X_i[b:e], s_j)$
that occurs completely within $X_\ell$.
Secondly we compute the smallest element of $\RnOcc(X_i[b:e], s_j)$
that occurs completely within $X_r$.
Thirdly we compute an occurrence of $s_j$ 
that crosses the boundary of $X_\ell$ and $X_r$,
and do not overlap the above occurrences of $\estr_j$ completely within $X_\ell$ and $X_r$.

Formally:
Let $\ee_\ell= b + q - 2 + \max Occ(X_i[b:|X_\ell|], s_j)$,
$\bb_r = |X_\ell|+\min Occ(X_i[|X_\ell|+1:e], s_j)$,
$u_1 = b+q-2+\max \LnOcc(X_i[b:\ee_\ell], s_j)$, and
$u_2 = \bb_r-1+\min \RnOcc(X_i[\bb_r:e], s_j)$.
We consider the case where all these values exist,
as other cases can be shown similarly.
It follows from Lemmas~\ref{lem:left_right_max_nonoverlap} and~\ref{lemma:nocc} that
\begin{eqnarray*}
  \lefteqn{\NOcc(X_i[b:e], s_j)} \\
  & = & |\LnOcc(X_i[b:u_1], s_j)| \! + \! \NOcc(X_i[u_1\!+\!1:u_2\!-\!1], s_j) \! + \! |\RnOcc(X_i[u_2:e], s_j)| \\
  & = & \NOcc(X_i[b:\ee_\ell], s_j) + \NOcc(X_i[u_1+1:u_2-1], s_j) + \NOcc(X_i[\bb_r:e], s_j), 
\end{eqnarray*}
(See also Fig.~\ref{fig:nocc_loc} in Appendix.)

By Lemma~\ref{lem:loc_kusi},
$(b, e) = \loc_q(X_i, j)$ can be pre-computed in a total of $O(q^2n)$ time.
Since $b < \ee_\ell$ and $\bb_r < e$,
$\ee_\ell$ and $\bb_r$ can be computed in $O(q)$ time 
using the KMP algorithm.
By Lemmas~\ref{lem:nocc_lpc} and~\ref{lem:nocc_lsc}
$\NOcc(X_i[b:\ee_\ell], s_j)$ and $\NOcc(X_i[\bb_r:e], s_j)$
can be pre-computed in a total of $O(q^2n)$ time
(Notice $(b, \ee_\ell) = \lsc_q(X_\ell, \ee_\ell)$
and $(\bb_r,e) = |X_\ell| \oplus \lpc_q(X_r, \bb_r-|X_\ell|)$).
By Lemmas~\ref{lem:max_lnocc_lsc} and~\ref{lem:min_rnocc_lpc},
$u_1$ and $u_2$ can be pre-computed in a total of $O(q^2n)$ time.
Hence $\NOcc(X_i[u_1+1:u_2-1], s_j)$ can be computed 
in $O(q)$ time using the KMP algorithm
for each $i$ and $j$.
The lemma thus holds.
\qed

\end{proof}

\subsection{Main Result}

The following theorem concludes this whole section.
\begin{theorem}
Problem~\ref{prob:q-gram_frequencies} can be solved in $O(q^2n)$ time and $O(qn)$ space.
\end{theorem}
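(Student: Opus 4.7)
The plan is to assemble the main theorem from the component results already established in this section. By Theorem~\ref{theo:q-gram}, Problem~\ref{prob:q-gram_frequencies} reduces to the $O(q^2n)$-time algorithm in Algorithm~\ref{algo:slpmain} once, for every variable $X_i = X_\ell X_r$ and every position $j$ in the specified boundary range, we have precomputed the longest overlapping cover $(b,e) = \loc_q(X_i, j)$ together with the non-overlapping frequency $\NOcc(X_i[b:e], s)$ where $s = X_i[j:j+q-1]$. Lemma~\ref{lem:loc_kusi} supplies all values of $\loc_q(X_i, j)$ in $O(q^2n)$ total time, and Lemma~\ref{lem:nocc_in_loc} supplies all values of $\NOcc(X_i[b:e], s)$ in $O(q^2n)$ total time, relying in turn on Lemmas~\ref{lem:lpc}--\ref{lem:nocc_lsc}. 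Summing these contributions, together with the $O(qn)$-time call to Theorem~\ref{thm:weighted_q_gram} at the end of Algorithm~\ref{algo:slpmain}, yields the claimed $O(q^2n)$ running time.

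For the space bound, the approach is to audit what persists between stages. For every variable $X_i$ we retain $\prefix(X_i, O(q))$ and $\suffix(X_i, O(q))$, contributing $O(qn)$ total. For each $X_i$ and each of the $O(q)$ relevant boundary positions $j$, we store only constant-size summaries: the pair $(b,e) = \loc_q(X_i,j)$, the pairs $\lpc_q(X_i,j)$ and $\lsc_q(X_i,j)$, the two largest $\LnOcc$ and two smallest $\RnOcc$ values from Lemmas~\ref{lem:max_lnocc_lpc} and~\ref{lem:min_rnocc_eloc}, and the scalar frequencies produced by Lemmas~\ref{lem:nocc_lpc}, \ref{lem:nocc_lsc}, and~\ref{lem:nocc_in_loc}. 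Over all variables this contributes $O(qn)$. The string $z$ and weight array $w$ assembled in Algorithm~\ref{algo:slpmain} have total length $\sum_i |t_i| = O(qn)$, and by Theorem~\ref{thm:weighted_q_gram} the final weighted $q$-gram frequency computation also fits in $O(qn)$ space.

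The one delicate point is to verify that the intermediate KMP invocations used inside the dynamic-programming recurrences — for instance when locating the crossing occurrences $b_i$, $\bb_r$, $\ee_\ell$ in the proofs of Lemmas~\ref{lem:lpc}, \ref{lem:nocc_lpc}, and~\ref{lem:nocc_in_loc} — are always run on strings of length $O(q)$ formed from the stored prefixes and suffixes, so the scratch space for each call is $O(q)$ and can be reclaimed once the $O(1)$-size answer is recorded. With this accounting, the $O(qn)$ working space is never simultaneously exceeded, and the theorem follows immediately by combining the time bound above with this space bound. I expect no substantial obstacle beyond this bookkeeping, since every prior lemma has already exposed exactly the summary data the final stage consumes.
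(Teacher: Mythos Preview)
Your proposal is correct and mirrors the paper's own proof: the time bound is obtained by combining Theorem~\ref{theo:q-gram} with Lemmas~\ref{lem:loc_kusi} and~\ref{lem:nocc_in_loc}, and the space bound by noting that only $O(q)$-length prefixes and suffixes plus a constant number of $O(qn)$-size dynamic-programming tables are retained. Your write-up is more explicit about the per-variable bookkeeping and the transient KMP scratch space, but the underlying argument is identical.
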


\begin{proof}
The time complexity and correctness follow from 
Theorem~\ref{theo:q-gram}, Lemma~\ref{lem:loc_kusi}, and Lemma~\ref{lem:nocc_in_loc}.

We compute and store 
strings $\suffix(X_i, 3(q-1))$ and $\prefix(X_i, 3(q-1))$ of length $O(q)$
for each variable $X_i$, hence this requires a total of $O(qn)$ space
for all $1 \leq i \leq n$.
We use a constant number of dynamic programming tables
each of which is of size $O(qn)$.
Hence the total space complexity is $O(qn)$.
\qed
\end{proof}

\section{Conclusion and Discussion}
We considered the problem of computing the non-overlapping frequencies
for all $q$-grams that occur in a given text represented as an SLP.
Our algorithm greatly improves previous work
which solved the problem only for $q=2$ requiring $O(n^4\log n)$ time and $O(n^3)$ space.
We give the first algorithm which works for any $q\geq 2$,
running in $O(q^2n)$ time
and $O(qn)$ space, where $n$ is the size of the SLP.

\bibliographystyle{splncs03}
\bibliography{ref}

\clearpage
\appendix
\section*{Appendix}

\section{Proofs}

\subsubsection{Proof of Theorem~\ref{thm:weighted_q_gram}.}
\begin{proof}
  We will make use of the {\em suffix array} and {\em lcp array}.

  The {\em suffix array}~\cite{manber93:_suffix} $\SA$ of any string $T$
  is an array of length $|T|$ such that
  $\SA[i] = j$, where $T[j:|T|]$ is the $i$-th lexicographically smallest suffix of $T$.
  The \emph{lcp} array of any string $T$ is an array of length $|T|$ such that
  $\LCP[i]$ is the length of the longest common prefix of
  $T[\SA[i-1]:|T|]$ and $T[\SA[i]:|T|]$ for $2 \leq i \leq |T|$, 
  and $\LCP[1] = 0$.

  It is well known that the suffix array for any string of length $|T|$
  can be constructed in $O(|T|)$ time (e.g.~\cite{Karkkainen_Sanders_icalp03})
  assuming an integer alphabet.
  Given the text and suffix array, the lcp array can also be calculated
  in $O(|T|)$ time~\cite{Kasai01}.

  We can calculate the overlapping $q$-gram frequencies of string $T$
  using suffix array SA and lcp array LCP.
  $SA[i]$ represents an occurrence of a $q$-gram $T[SA[i]:SA[i]+q-1]$.
  Since the suffixes are lexicographically sorted in the suffix array,
  intervals on the suffix array where the values of lcp array are at
  least $q$ represent occurrence of the same $q$-gram.
  The sum of $w[SA[i]]$ in this interval is the desired value for the
  $q$-gram.
  Constructing SA, LCP can be done in $O(|T|)$ time,
  and summing up $w[SA[i]]$ for each interval 
  where $LCP[i] \geq q$ can easily be done in $O(|T|)$
  by a simple scan.\qed
\end{proof}

\subsubsection{Proof of Lemma~\ref{lem:left_right_max_nonoverlap}.}
\begin{proof}
  We prove $\NOcc(T[1:i],P) = |\LnOcc(T[1:i],P)|$ by induction on $i$.
  For $i \leq 1$, the statement clearly holds.
  Now, assume that the statement holds for $i < k$, where $k \geq 2$.
  For $i = k$, 
  notice that $0 \leq \NOcc(T[1:k],P) - |\LnOcc(T[1:k],P)\leq 1$, 
  since there can be at most one new occurrence of $P$ ending at
  position $i$, which may or may not be counted for $\NOcc(T[1:k],P)$.
  If we assume on the contrary that the statement does not hold for $i=k$,
  then $\NOcc(T[1:k],P) - \NOcc(T[1:k-1],P) = \NOcc(T[1:k],P) - |\LnOcc(T[1:k],P)| = 1$.
  Since the change was caused by the new occurrence,
  we have $\NOcc(T[1:k]) = \NOcc(T[1:k-|P|]) + 1$.
  By the inductive hypothesis, we have
  $\NOcc(T[1:k-|P|],P) = |\LnOcc(T[1:k-|P|],P)|$.
  Also, $|\LnOcc(T[1:k],P)| = |\LnOcc(T[1:k-|P|],P)| + 1$,
  since the new occurrence does not overlap with any occurrences
  in $\LnOcc(T[1:k-|P|])$.
  This leads to $\NOcc(T[1:k]) = |\LnOcc(T[1:k],P)|$, a contradiction.
  $\NOcc(T, P) = |\RnOcc(T, P)|$ can be shown symmetrically.
  \qed
\end{proof}

\subsubsection{Proof of Lemma~\ref{lem:max_lnocc_lpc}.}
\begin{proof}
We compute the smallest occurrence $b_i$ in $(j-1) \oplus \LnOcc(X_i[j:\be_i], \bstr_{j})$
that crosses $X_\ell$ and $X_r$.
Also, we compute the smallest occurrence $\bb_r$ in $(j-1) \oplus \LnOcc(X_i[j:\be_i], \bstr_{j})$
that is completely within $X_r$.

Then the desired value $\max_1 \LnOcc(X_i[j:\be_i], \bstr_{j})$ can be computed
depending whether $b_i$ and $\bb_r$ exist or not.

Formally, 
consider the set 
$S = ((j-1) \oplus \LnOcc(X_i[j:\be_i], \bstr_j)) \cap [|X_\ell|-q+2:|X_\ell|]$
of occurrence of $\bstr_j$ which is either empty or singleton.
If $S$ is singleton, then let $b_i$ be its single element.
Let
$\bb_r = \min \{k \mid k \in ((j-1) \oplus \LnOcc(X_i[j:\be_i], \bstr_j)) \cap [|X_\ell|+1:|X_\ell|+2(q-1)],
\mbox{if } \exists b_i \mbox{ then } k \geq b_i + q\}$.

Then we have   
\begin{eqnarray*}
&&\textstyle{\max_1} \LnOcc(X_i[j:\be_i], \bstr_{j}) \\
&&=  \begin{cases}
    \max_1 \LnOcc(X_\ell[j:\be_\ell],\bstr_j)
    & \mbox{if } \not \exists b_i \mbox{ and } \not \exists \bb_r \\
    
    b_i-j+1 & \mbox{if } \exists b_i \mbox{ and } \not \exists \bb_r \\
    \bb_r-j + \max_1 \LnOcc(X_r[\bb_r-|X_\ell|:\be_r], \bstr_j) & \mbox{if } \exists \bb_r
  \end{cases}
\end{eqnarray*}
(See also Fig.~\ref{lem:max_lnocc_lpc} in Appendix~\ref{app_sec:figure}.)

For all variables $X_i$ we pre-compute $\prefix(X_i, 3(q-1))$ and $\suffix(X_i, 3(q-1))$.
This can be done in a total of $O(qn)$ time.
If $b_i$ or $\bb_r$ exists,
$|X_\ell|-3(q-1) \leq j-1+ \max \LnOcc(X_\ell[j:\be_\ell], j) \leq |X_\ell|-q+1$.
Then, each $b_i$ and $\bb_r$ can be computed from
$\LnOcc(X_i[(j-1+ \max \LnOcc(X_\ell[j:\be_\ell], j)):|X_\ell|+3(q-1)], p_j)$
runnning the KMP algorithm on string $\prefix(X_i, 3(q-1))\suffix(X_i, 3(q-1))$.

Based on the above recursion, we can compute 
$\max_1 \LnOcc(X_i[j:\be_i], \bstr_{j})$ in a total of $O(q^2n)$
time for all $1 \leq i \leq n$ and $1 \leq j \leq 2(q-1)$.

  It is not difficult to see that similar claims,
  with slightly different conditions, can be made for
  $\max_2\LnOcc(X_i[j:\be_i],\bstr_j)$ where the value corresponds to one of 4 values:
  $\max_2\LnOcc(X_\ell[j:\be_\ell],\bstr_j)$,
  $\max_1\LnOcc(X_\ell[j:\be_\ell],\bstr_j)$,
  $b_i$,
  or $\max_2\LnOcc(X_r[\bb_r-|X_\ell|:\be_r],\bstr_j)$,
  with appropriate offsets.  \qed
\end{proof}

\subsubsection{Proof of Lemma~\ref{lem:max_lnocc_lsc}.}

\begin{proof}
Our basic strategy for computing $\max \LnOcc(X_i[\eb_i:\ee_i], \estr_{j})$
is as follows.
Firstly we compute the largest element of $\LnOcc(X_i[\eb_i:\ee_i], \estr_{j})$
that occurs completely within $X_\ell$.
Secondly we compute the smallest element of $\LnOcc(X_i[\eb_i:\ee_i], \estr_{j})$
that crosses the boundary of $X_\ell$ and $X_r$.
Let $d$ be this occurrence, if such exists.
Then the desired output $\max \LnOcc(X_i[\eb_i:\ee_i], \estr_{j})$ is 
given as either 
the largest or the second largest element of $(d+q-1) \oplus \LnOcc(X_r[d+q-|X_\ell|:|X_r|], \estr_j)$.

More formally:
We consider the case where $\eb_i+q-1 \leq |X_\ell|$.
Let $\ee_\ell = q-1+\max (\Occ(X_i, \estr_j) \cap [|X_\ell|-2(q-1)+1:|X_\ell|-q+1])$,
$m = \eb_i-1+\max \LnOcc(X_\ell[\eb_i:\ee_\ell],\estr_j)$ where 
$(\eb_i, \ee_\ell) = \lsc_q(X_\ell,|X_\ell|-\ee_\ell+1)$.
Let $d =m+q-1+ \min \LnOcc(X_i[m+q:\ee_i],\estr_j)$.
Let
\[
 \bb_r = 
  \begin{cases}
   d & \mbox{if } \ee_i\!-\!q\!+\!1 \! \leq \! |X_\ell| \mbox{ or } d \! > \!|X_\ell|, \\
   d\!+\!q\!-\!1\!+\!\min\LnOcc(X_i[d\!+\!q:|X_i|], \estr_j) & \mbox{otherwise.}
  \end{cases}
\]
Let 
$h^\prime = |X_\ell|+\max_2 \LnOcc(X_r[\bb_{r^\prime}:\be_{r^\prime}], \estr_j)$ and
$h = |X_\ell|+\max_1 \LnOcc(X_r[\bb_{r^\prime}:\be_{r^\prime}], \estr_j)$
where $(\bb_{r^\prime}, \be_{r^\prime}) =\lpc_q(X_r,\bb_r-|X_\ell|)$.
(See also Fig.~\ref{fig:maxLnOcc_eloc} in Appendix~\ref{app_sec:figure}.)
Then 
\[
\max \LnOcc(X_i[\eb_i:\ee_i], \estr_j) = 
 \begin{cases}
  h & \mbox{if } h \leq \ee_i-q+1, \\
  h^\prime & \mbox{otherwise.} \\
 \end{cases}
\]
The case where $\eb_i+q-1 > |X_\ell|$ can be solved similarly.

Each $\ee_\ell$, $d$ and $\bb_r$ can be computed in $O(q)$ time using the KMP algorithm,
hence requiring a total of $O(q^2n)$ time.
By Lemmas~\ref{lem:lpc} and~\ref{lem:lsc},
$\lsc_q(X_\ell,\ee_\ell)$ and $\lpc_q(X_i,\bb_r)$ can be computed in
$O(q^2n)$ time for all $X_i = X_\ell X_r$ and $1 \leq j < n$.
By Lemma~\ref{lem:max_lnocc_lpc},
$h^\prime$ and $h$ can be computed in a total of 
$O(q^2n)$ time for all $X_i = X_\ell X_r$ and $1 \leq j < n$.
Therefore, by dynamic programming we can compute
$\LnOcc(X_i[\eb_i:\ee_i], \estr_j)$ in a total of $O(q^2n)$ time.
\qed
\end{proof}

\clearpage
\section{Figures} \label{app_sec:figure}
\begin{figure}[h]
  \centerline{\includegraphics[width=0.6\textwidth]{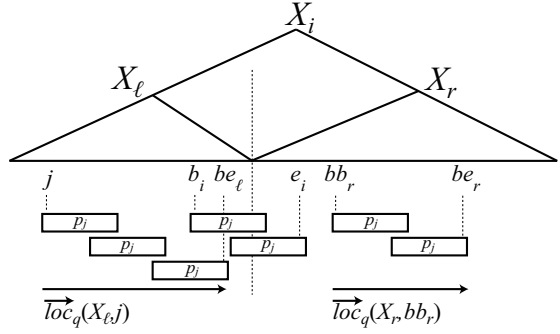}}
  \caption{Illustration for Lemma~\ref{lem:lpc}.
  In this figure, $\lpc_q(X_i, j) = (j, e_i)$.}
  \label{fig:ploc}
\end{figure}

\begin{figure}[h]
  \centerline{\includegraphics[width=0.6\textwidth]{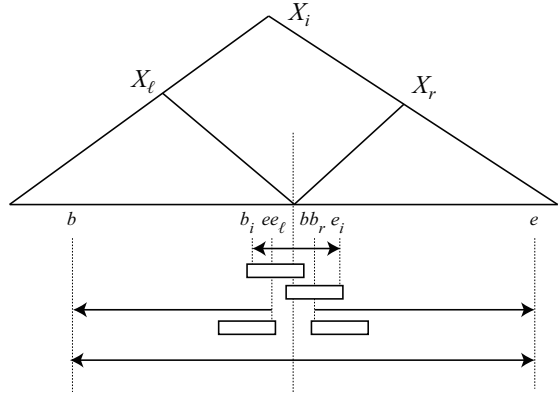}}
  \caption{Illustration for Lemma~\ref{lem:loc_kusi}. Rectangles show important occurrences of $X_i[j:j+q-1]$. In this case $b = \eb_\ell$ and $e = \be_r$.}
  \label{fig:loc_kusi}
\end{figure}

\begin{figure}[h]
  \centerline{\includegraphics[width=0.6\textwidth]{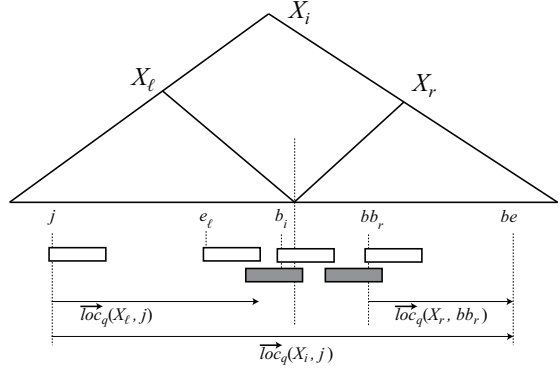}}
  \caption{Illustration for Lemma~\ref{lem:max_lnocc_lpc},
    calculating $\max \LnOcc(X_i[j:\be], \bstr_{j})$.
    Shadowed occurrences are not in $\LnOcc(X_i[j:\be_i], \bstr_{j})$,
    while white ones are in $\LnOcc(X_i[j:\be_i], \bstr_{j})$.}
  \label{fig:maxLnOcc}
\end{figure}

\begin{figure}[h]
  \centerline{\includegraphics[width=0.6\textwidth]{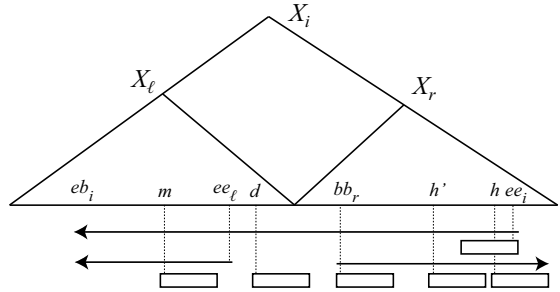}}
  \caption{Illustration for Lemma~\ref{lem:max_lnocc_lsc}. Rectangles show important occurrences of $\estr_j$. In this case $\max\LnOcc(X_i[\eb_i, \ee_i], \estr_j) = h^\prime$, as $h > \ee_i-q+1$.}
  \label{fig:maxLnOcc_eloc}
\end{figure}

\begin{figure}[h]
  \centerline{\includegraphics[width=0.6\textwidth]{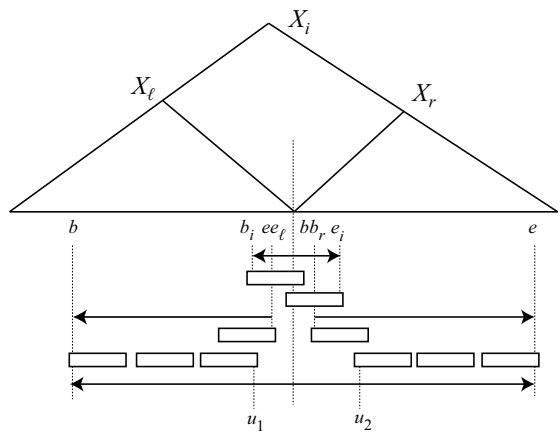}}
  \caption{Illustration for Lemma~\ref{lem:nocc_in_loc}. Rectangles show important occurrences of $X_i[j:j+q-1]$. In this case 
  $\NOcc(X_i[b:\ee_\ell], s_j) = 3$,
  $\NOcc(X_i[u_1+1:u_2-1], s_j) = 1$, and 
  $\NOcc(X_i[\bb_r:e], s_j) = 3$.}
  \label{fig:nocc_loc}
\end{figure}

\end{document}